\documentclass[letterpaper, 10 pt, conference]{ieeeconf}  

\IEEEoverridecommandlockouts                              

\usepackage{amsmath,amssymb,amsfonts}
\usepackage{graphicx}
\usepackage{cite}
\usepackage{xcolor}
\usepackage{verbatim}
\usepackage{algorithm}
\usepackage{algpseudocode}
\usepackage{subcaption}

\makeatletter
\let\NAT@parse\undefined
\makeatother

\usepackage{hyperref}

\newtheorem{theorem}{Theorem}[section]
\newtheorem{lemma}[theorem]{Lemma}
\newtheorem{proposition}[theorem]{Proposition}

\newtheorem{definition}[theorem]{Definition}
\newtheorem{remark}[theorem]{Remark}
\newtheorem{example}[theorem]{Example}
\newenvironment{proofsketch}
{\par\vspace{0.3em}\noindent\hspace{2em}{\itshape Proof sketch: }\ignorespaces}
{\hspace*{\fill}~\QED\par}

\title{\LARGE \bf
On a Gradation for Asymptotic Stability
}

\author{Yigit Narter and Inigo Incer%
\thanks{Department of Electrical and Computer Engineering, University of Michigan, 1301 Beal Avenue, Ann Arbor, MI 48109-2122, USA.
Emails: {\tt\small \{narter,iir\}@umich.edu}. Corresponding author: Y. Narter.}%
}

\begin{document}

\maketitle
\thispagestyle{empty}
\pagestyle{empty}

\begin{abstract}
Classical asymptotic stability guarantees convergence but does not quantify
the rate at which convergence occurs. This paper introduces a gradation of
asymptotic stability where degree zero corresponds to exponential stability
and degree \(m>0\) corresponds to algebraic decay of order \(t^{-1/m}\).
We provide direct and converse Lyapunov characterizations for admissible degrees and
conditions for certifying the exact stability degree. Hopf, Bautin,
fractional-degree, and time-varying examples demonstrate how the degree
identifies the leading stabilizing mechanism.
\end{abstract}

\begin{keywords}
Asymptotic stability, convergence rates, Lyapunov methods,
nonlinear systems, algebraic decay.
\end{keywords}

\section{Introduction}

The stability of dynamical systems is a central topic in nonlinear analysis,
control theory, and applied mathematics
\cite{lyapunov1992general,hahn1967stability,khalil2002nonlinear}.
The key notion of Lyapunov stability requires trajectories of a dynamical
system to stay arbitrarily close to an equilibrium point if they start
sufficiently close to it. This notion is further qualified as asymptotic
stability when trajectories converge to the equilibrium point, and as
exponential stability when this convergence satisfies an exponential bound.
There is, however, a spectrum of bounds between convergence and exponential
convergence.

This gap is visible in the classical Hopf bifurcation, where a conjugate
pair of eigenvalues crosses the imaginary axis and a small-amplitude
periodic orbit may emerge or disappear. The usual
bifurcation-theoretic classification emphasizes the qualitative phase
portrait: trajectories may spiral into an equilibrium, spiral away from it,
or approach a stable or unstable limit cycle, depending on the signs of the
unfolding parameter and the first Lyapunov coefficient
\cite{guckenheimer2013nonlinear,kuznetsov2004one}.
However, this classification does not distinguish between different
convergence laws within the attracting equilibrium regime.

Indeed, near a Hopf bifurcation, the leading radial dynamics can be written in
normal-form coordinates as
\(\dot r=r(\lambda+\alpha r^2)\), with angular dynamics decoupled to leading
order. When \(\lambda<0\), the linear radial term is stabilizing and
trajectories converge exponentially to the origin. At the critical value
\(\lambda=0\), however, the linear radial damping vanishes. If \(\alpha<0\),
the origin is still locally attracting, but the first stabilizing term is now
cubic: \(\dot r=-|\alpha|r^3\). Consequently,
\(r(t)\asymp(t-t_0)^{-1/2}\) as \(t\to\infty\), so the equilibrium remains
asymptotically stable while its convergence law changes from exponential to
algebraic. Here and throughout, \(f(t)=O(g(t))\) as \(t\to\infty\) means that
\(|f(t)|\le C|g(t)|\) for all sufficiently large \(t\) and some \(C>0\),
while \(f(t)\asymp g(t)\) means that both \(f(t)=O(g(t))\) and
\(g(t)=O(f(t))\). Thus two regimes that are both ``spiraling into the origin''
from the qualitative phase-portrait viewpoint have different decay mechanisms
and different rates.

An even sharper separation occurs in generalized Hopf, or Bautin,
bifurcations. A Bautin point is a degenerate Hopf bifurcation at which the first Lyapunov
coefficient vanishes, so that the cubic radial term disappears and the
quintic term becomes relevant. Near such a point, the radial normal form can be written as \(\dot r=r(\beta_1+\beta_2 r^2-a r^4)\) for \(a>0\), again with angular
dynamics decoupled to leading order \cite{Guckenheimer:2007,yu2007simplest}. In this family,
\(\beta_1<0\) results in exponential decay; \(\beta_1=0\), \(\beta_2<0\)
gives \(r(t)\asymp(t-t_0)^{-1/2}\); and
\(\beta_1=\beta_2=0\) yields
\(r(t)\asymp(t-t_0)^{-1/4}\), as \(t\to\infty\).
Thus equilibria that are all asymptotically stable can exhibit substantially
different convergence laws depending on which stabilizing coefficient is the
first nonzero one. Figure~\ref{fig:hopf-bautin-decay} illustrates this rate
separation.

\begin{figure}[t]
    \centering
    \begin{subfigure}{0.48\linewidth}
        \centering
        \includegraphics[width=\linewidth]{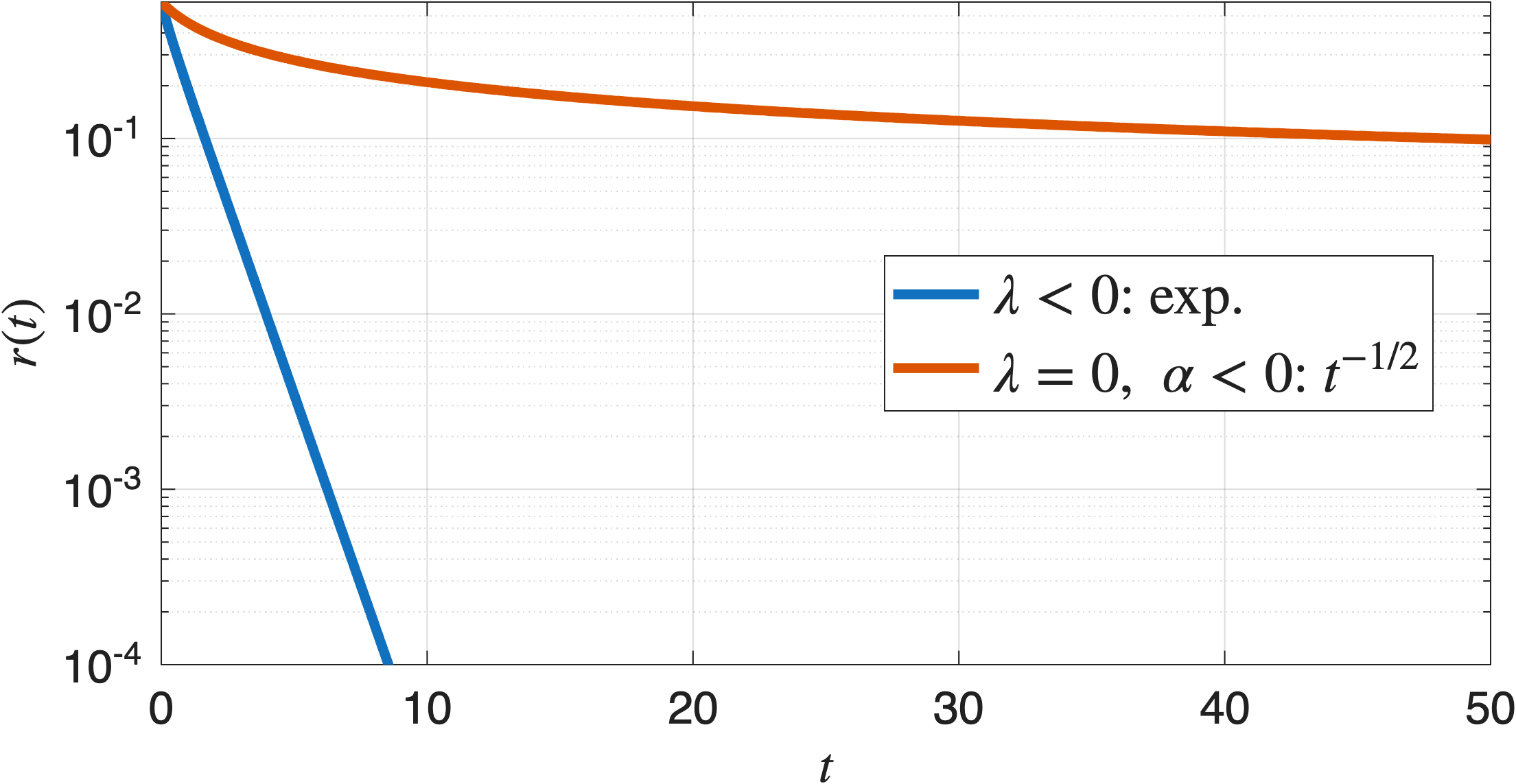}
        \caption{Hopf radial decay.}
        \label{fig:hopf-decay}
    \end{subfigure}
    \hfill
    \begin{subfigure}{0.48\linewidth}
        \centering
        \includegraphics[width=\linewidth]{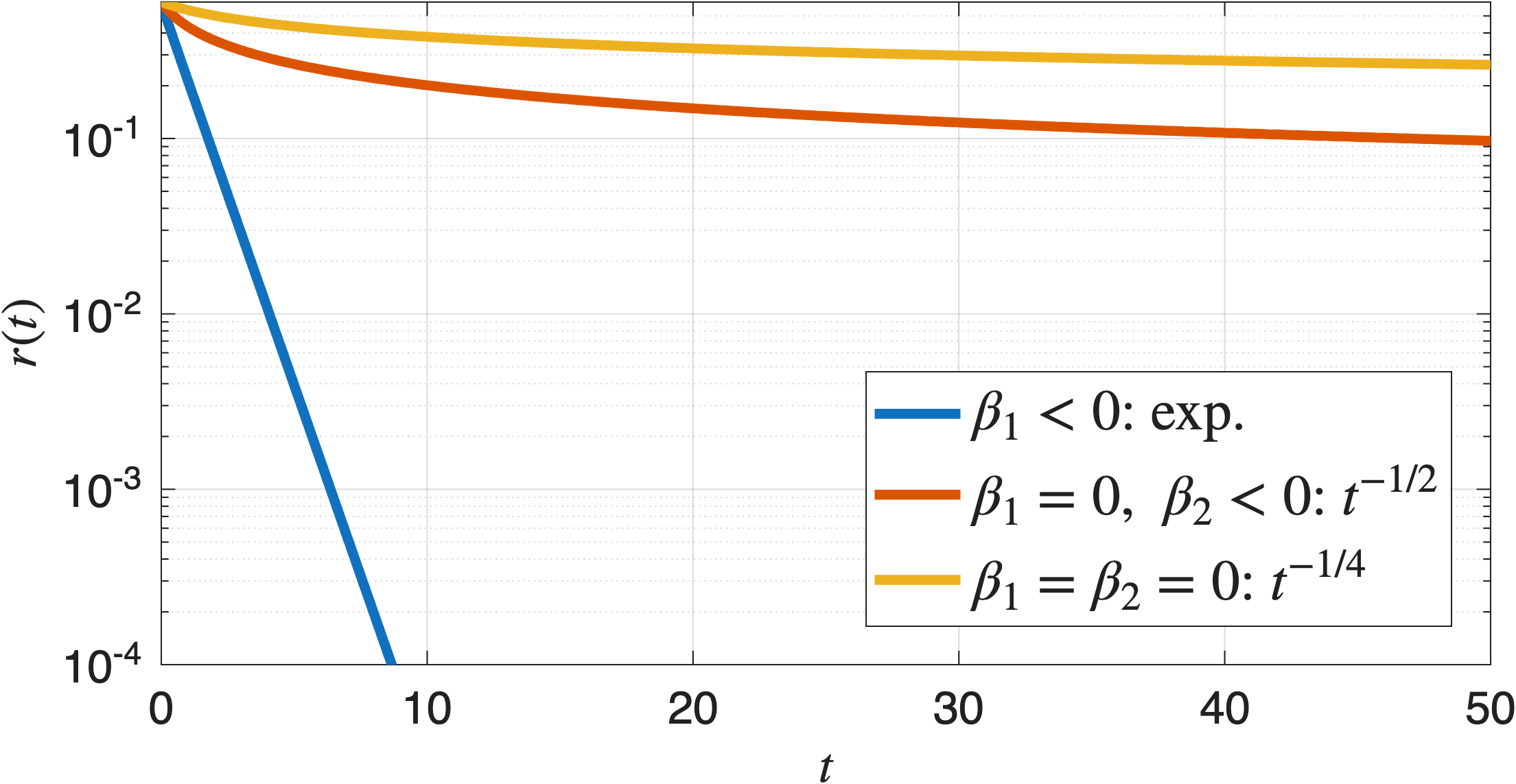}
        \caption{Bautin radial decay.}
        \label{fig:bautin-decay}
    \end{subfigure}
    \caption{Radial decay for Hopf and Bautin: in our framework, degree \(0\) corresponds to exponential decay, while degrees \(2\) and \(4\) correspond to \(t^{-1/2}\) and \(t^{-1/4}\) decay, respectively.}
    \label{fig:hopf-bautin-decay}
\end{figure}

These examples show that the decay rate is not merely a quantitative detail,
but a structural property of the vector field near the equilibrium. Classical
asymptotic stability groups all of the above stable cases under the same
label, while exponential stability only distinguishes the case in which a
linear stabilizing term is present. The intermediate algebraic regimes are
therefore not explicitly organized by the usual qualitative terminology.

\textbf{Contributions.}
Motivated by this gap, we introduce a rate-based degree framework for local
asymptotic stability. The framework is built from a canonical family of
comparison envelopes: degree \(0\) corresponds to exponential convergence,
while degree \(m>0\) corresponds to an admissible algebraic decay bound of
order \(O(t^{-1/m})\). We prove that degree estimates imply asymptotic
stability and develop Lyapunov certificates for this gradation. In particular,
an upper Lyapunov dissipation inequality certifies an admissible degree \(m\),
while a matching lower inequality on a forward-invariant set yields
trajectories of order \(\asymp t^{-1/m}\), rules out every faster degree, and
thereby certifies the exact degree. We also provide a nonsmooth converse
characterization using upper Dini derivatives along solutions: a degree-\(m\)
estimate is equivalent to the existence of a continuous Lyapunov certificate
with matching degree-\(m\) dissipation. Overall, the proposed framework
complements classical Lyapunov stability theory by making decay rates explicit.

\section{Related Work}

Our work builds on classical Lyapunov stability theory and converse Lyapunov
theorems. Uniform asymptotic stability can be characterized through
class-\(\mathcal{KL}\) trajectory estimates and Lyapunov functions; see, e.g.,
\cite{khalil2002nonlinear,sontag2008input,kellett2015classical}. Converse
results establish Lyapunov functions from such trajectory estimates in broad
settings, including time-varying systems
\cite{lin1996smooth,teel2000smooth}. These results provide general
characterizations of asymptotic stability, whereas our objective is to
introduce a one-parameter gradation that distinguishes asymptotically stable
equilibria by their convergence rates.

Power-type Lyapunov inequalities are classical:
\(\dot V\le-cV^\beta\) implies finite-time stability for \(0<\beta<1\)
\cite{bhat2000finite}, exponential decay for \(\beta=1\)
\cite{khalil2002nonlinear}, and algebraic decay for \(\beta>1\)
\cite{caraballo2001decay,jammazi2013rational}.
Rather than fixing the exponent, our framework treats the corresponding
decay order as a stability parameter and identifies its smallest admissible
value. Closely related is rational stability
\cite{bacciotti2005liapunov,jammazi2013rational}, commonly expressed through
an estimate of the form
\begin{equation}
\label{eq:rational}
    \|\psi(t,x)\|^p\le M\|x\|^p/(1+k\|x\|^p t),
\end{equation}
where \(M,k>0\) and \(p\)
is generally taken to be a positive integer. Here, \(p\) is fixed a priori and specifies the algebraic order \(O(t^{-1/p})\). Our framework instead treats the corresponding order \(m\) as the stability
parameter to be determined: degree \(m=0\) denotes
exponential stability, while a real degree \(m>0\) corresponds to an
admissible decay envelope of order \(t^{-1/m}\). Without prescribing this order a priori, we consider the set of all admissible orders and define its smallest element, when attained, as the degree of the equilibrium, with matching Lyapunov lower bounds certifying exactness and thereby organizing exponential and algebraic convergence within a
single gradation.

The recent work of Jagt and Peet~\cite{jagt2026lyapunov} is particularly
close to ours. They develop a framework for quantifying the rate performance
of autonomous nonlinear ODEs relative to a prescribed family of trajectory
bounds. In the rational case, this again takes the form \eqref{eq:rational} after the decay model and
state exponent \(p\) are specified; \(k\) measures the rate and \(M\) the
gain. More generally, their framework fixes a state measure and a comparison
function and derives necessary and sufficient Lyapunov conditions for the
associated rate and gain performance. Our framework addresses a different
question: instead of fixing the algebraic decay order and quantifying the
constants in the resulting estimate, we make that order itself the quantity
to be classified. The single parameter \(m\) determines the family
\(t^{-1/m}\), and the smallest admissible \(m\), when attained, determines the degree of
stability. Accordingly, our objective is not to optimize rate or gain constants for a prescribed decay family, but to determine the boundary of admissible decay orders and certify that boundary as the equilibrium's
stability degree; see Example~\ref{ex:hopf} for a
direct comparison. Moreover, our framework and converse characterization apply directly to time-varying
systems \(\dot x=f(t,x)\), whereas \cite{jagt2026lyapunov} develops its rate-performance framework for autonomous ODEs.

Finally, the proposed degree is distinct from the classical Lyapunov exponent
used to measure exponential growth or decay
\cite{colonius1999lyapunov}. Since
\(t^{-1}\log\|x(t)\|\to0\) for algebraically decaying trajectories, different
algebraic rates have the same zero Lyapunov exponent. Our degree refines this
regime by distinguishing their algebraic decay orders.

\section{Degree of Stability}

\newcommand{\reals}[0]{\mathbb{R}}
\newcommand{\nnreals}[0]{\reals_{\ge 0}}
\newcommand{\preals}[0]{\reals_{> 0}}

Consider the system
\begin{equation}
    \dot x = f(t,x), \qquad x\in\mathbb R^n,
    \label{eq:system}
\end{equation}
where \(f(t,0)=0\) for all \(t\ge0\), and \(f\) is continuous in \(t\)
and locally Lipschitz in \(x\), uniformly with respect to \(t\) on compact
sets. These assumptions ensure local existence and uniqueness of solutions, which
we denote by \(x(t,t_0,x_0)\). Throughout the paper, we study the equilibrium
at the origin; a general equilibrium can be translated to the origin by a
change of coordinates.

As mentioned in the previous sections, classical asymptotic stability specifies
that trajectories converge to the equilibrium, but it does not distinguish
between qualitatively different rates of convergence. In particular, an
exponentially stable equilibrium and an equilibrium whose trajectories decay
only algebraically are both classified as asymptotically stable. We refine this
distinction by introducing a family of canonical decay envelopes. The parameter
\(m \in \nnreals\) indexes the rate of convergence: \(m=0\) corresponds to exponential
decay, while \(m>0\) represents the algebraic order \(t^{-1/m}\).

For \(\alpha \in \preals\) and \(t,\rho \in \nnreals\), define
\begin{equation}
\label{eq:degree-m-envelope}
\phi_m^\alpha(t;\rho):=
\begin{cases}
\rho e^{-\alpha t}, & m=0, \\[4pt]
\displaystyle
\rho\left(1+m\alpha\rho^m t\right)^{-1/m},
& m>0.
\end{cases}
\end{equation}
Note that, for each \(m\ge0\), the function \(\phi_m^\alpha\) is continuous in \((t,\rho)\). Moreover, the definition is right continuous at \(m=0\) for a fixed \((t,\rho)\), since \(\lim_{m\to0^+}\rho(1+m\alpha\rho^m t)^{-1/m}=\rho e^{-\alpha t}\). For \(m>0\), the comparison function is motivated by
\(\dot z=-\alpha z^{m+1}\), whose positive solutions satisfy
\(z(t)=z_0(1+m\alpha z_0^m t)^{-1/m}\). Thus, the family
\(\phi_m^\alpha\) interpolates between exponential and polynomial convergence
rates.

\begin{definition}[Admissible degree-\(m\) estimate]
\label{def:degree-m-estimate}
Consider system~\eqref{eq:system} and let \(m\in\mathbb R_{\ge0}\).
We say that the origin admits a \emph{degree-\(m\) estimate} if, for each
\(t_0\ge0\), there exist constants
\(\alpha_{t_0},c_{t_0},r_{t_0}>0\) such that for every
\(x_0\in B_{r_{t_0}}(0)\) and every \(t\ge t_0\), the corresponding solution
satisfies
\begin{equation}
\label{eq:degree-m-estimate}
    \|x(t,t_0,x_0)\|
    \le
    c_{t_0}\,
    \phi_m^{\alpha_{t_0}}(t-t_0;\|x_0\|).
\end{equation}
The estimate is called \emph{uniform} if the constants
\(\alpha_{t_0},c_{t_0},r_{t_0}>0\) can be chosen independently of \(t_0\).
\end{definition}

The next result shows that uniform degree estimates are not merely rate
conditions, they already imply local uniform asymptotic stability. Thus degree estimates refine asymptotic stability into subclasses indexed by convergence rate.

\begin{proposition}
\label{prop:degree-estimate-implies-uas}
For system~\eqref{eq:system}, if the origin admits a uniform degree-\(m\) estimate for some
\(m\in\mathbb R_{\ge0}\), then the origin is locally uniformly
asymptotically stable.
\end{proposition}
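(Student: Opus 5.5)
The plan is to show that the uniform degree-\(m\) estimate is a class-\(\mathcal{KL}\) bound. The standard characterization then gives local uniform asymptotic stability: the origin is locally UAS if and only if there exist \(r>0\) and \(\beta\in\mathcal{KL}\) with \(\|x(t,t_0,x_0)\|\le\beta(\|x_0\|,t-t_0)\) for all \(t_0\ge0\), \(t\ge t_0\), \(\|x_0\|<r\) \cite{khalil2002nonlinear}. Fix the uniform constants \(\alpha,c,r>0\) from Definition~\ref{def:degree-m-estimate} and set \(\beta(\rho,s):=c\,\phi_m^\alpha(s;\rho)\). Then \eqref{eq:degree-m-estimate} is exactly the required bound on \(B_r(0)\), uniformly in \(t_0\). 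It only remains to check that \(\beta\in\mathcal{KL}\).

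For \(m=0\) this is immediate, since \(\beta(\rho,s)=c\rho e^{-\alpha s}\). For \(m>0\), write
\[
\beta(\rho,s)=c\,\bigl(\rho^{-m}+m\alpha s\bigr)^{-1/m}\quad(\rho>0),\qquad \beta(0,s)=0.
\]
Three properties need checking. First, continuity on \(\nnreals\times\nnreals\) was already noted after \eqref{eq:degree-m-envelope}. Second, for fixed \(s\), the map \(\rho\mapsto\rho^{-m}+m\alpha s\) is strictly decreasing on \((0,\infty)\), so \(\beta(\cdot,s)\) is strictly increasing with \(\beta(0,s)=0\); this makes it class \(\mathcal K\). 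Third, for fixed \(\rho>0\), \(\beta(\rho,\cdot)\) is decreasing and satisfies \(\beta(\rho,s)\le c\,(m\alpha s)^{-1/m}\to0\) as \(s\to\infty\).

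For readers who prefer the \(\varepsilon\)--\(\delta\) definitions, both parts can also be verified directly. The key observation is \(\phi_m^\alpha(s;\rho)\le\rho\) for all \(s\ge0\).

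\emph{Uniform stability.} Given \(\varepsilon>0\), take \(\delta=\min\{r,\varepsilon/c\}\). Then \(\|x_0\|<\delta\) gives \(\|x(t)\|\le c\|x_0\|<\varepsilon\) for every \(t_0\), provided \(c\ge 1\); this is without loss of generality, since evaluating at \(t=t_0\) forces \(c\ge1\) anyway.

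\emph{Uniform attractivity.} For \(\|x_0\|<r\), the bound \(\|x(t)\|\le c\,\phi_m^\alpha(t-t_0;r)\) holds because \(\phi\) is increasing in \(\rho\), and the right-hand side tends to zero independently of \(t_0\) and \(x_0\). Given \(\eta>0\), choose \(T\) with \(c\,\phi_m^\alpha(T;r)<\eta\).

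The only points needing care are minor. The estimate itself presupposes that solutions exist for all \(t\ge t_0\); it also shows they stay bounded, so there is no finite escape and forward completeness holds on \(B_r(0)\). The other point is that the \(m>0\) envelope is strictly increasing in \(\rho\), which is the monotonicity computation above. I expect no substantive obstacle beyond this bookkeeping.
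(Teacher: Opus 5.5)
Your proof is correct, but your main argument takes a different route from the paper's. You invoke the standard equivalence between local uniform asymptotic stability and a uniform class-$\mathcal{KL}$ bound. The work then reduces to checking that $\beta(\rho,s)=c\,\phi_m^\alpha(s;\rho)$ is class $\mathcal{KL}$. That check needs joint continuity and \emph{strict} monotonicity in $\rho$, and your rewriting $\bigl(\rho^{-m}+m\alpha s\bigr)^{-1/m}$ settles both cleanly.

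The paper's proof does not use comparison functions. It verifies uniform stability and uniform attractivity directly from the $\varepsilon$--$\delta$ definitions. For this it needs only three facts: $\phi_m^\alpha(s;\rho)\le\rho$, nondecreasing dependence on $\rho$, and $\phi_m^\alpha(s;\rho)\to0$ as $s\to\infty$. That is exactly your secondary verification, with the same choice $\delta=\min\{r,\varepsilon/c\}$. The only difference is inessential: the paper bounds by $\phi_m^\alpha(\cdot;\delta_a)$ for a fixed $\delta_a\in(0,r)$ rather than by $\phi_m^\alpha(\cdot;r)$.

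The $\mathcal{KL}$ route is shorter and places degree estimates inside the usual comparison-function framework. Its cost is reliance on an external characterization and a slightly stronger monotonicity check. The direct route is self-contained and asks less of the envelope.

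One small correction: the caveat ``provided $c\ge1$'' in your uniform-stability step is unnecessary. The bound $c\|x_0\|<c\cdot\varepsilon/c=\varepsilon$ holds for every $c>0$.
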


\begin{proof}
By Definition~\ref{def:degree-m-estimate} and~\eqref{eq:degree-m-estimate}, the uniform degree-\(m\) estimate
provides constants \(\alpha,c,r>0\), independent of \(t_0\), such that
\(\|x(t,t_0,x_0)\|\le
c\,\phi_m^\alpha(t-t_0;\|x_0\|)\)
for every \(t_0\ge0\), \(x_0\in B_r(0)\), and \(t\ge t_0\).
Since \(m\ge0\), \(\alpha>0\), and \(t-t_0\ge0\), \eqref{eq:degree-m-envelope} gives
\(\phi_m^\alpha(t-t_0;\|x_0\|)\le\|x_0\|\).
Hence \(\|x(t,t_0,x_0)\|\le c\|x_0\|\) for all \(t\ge t_0\).
Given \(\varepsilon>0\), choose
\(\delta:=\min\{r,\varepsilon/c\}\).
Then \(\|x_0\|<\delta\) implies
\(\|x(t,t_0,x_0)\|<\varepsilon\) for all \(t\ge t_0\),
uniformly in \(t_0\). Thus the origin is locally uniformly stable.

It remains to prove local uniform attractivity. Fix any
\(\delta_a\in(0,r)\). Since \(\phi_m^\alpha(s;\rho)\) is nondecreasing in
\(\rho\), for every \(x_0\in B_{\delta_a}(0)\) and \(t\ge t_0\) we have
\(\|x(t,t_0,x_0)\|\le
c\,\phi_m^\alpha(t-t_0;\delta_a)\).
Moreover, by~\eqref{eq:degree-m-envelope}, \(\phi_m^\alpha(s;\delta_a)\to0\) as \(s\to\infty\) for every
\(m\ge0\) and \(\alpha>0\). Hence, for every \(\varepsilon>0\), there exists
\(T>0\), independent of \(t_0\) and \(x_0\in B_{\delta_a}(0)\), such that
\(\|x(t,t_0,x_0)\|<\varepsilon\) whenever \(t\ge t_0+T\).
Thus the origin is locally uniformly attractive. Together with local uniform
stability, this proves local uniform asymptotic stability.
\end{proof}

We now record the rate interpretation of the parameter \(m\). This result
shows that the comparison family \(\{\phi_m^\alpha:m>0\}\) represents all
algebraic decay bounds.

\begin{proposition}
\label{prop:degree-parameter-algebraic-decay}
For system~\eqref{eq:system}, let \(m>0\). If the origin admits a degree-\(m\) estimate, then, for every fixed \(t_0\ge0\) and every \(x_0\in B_{r_{t_0}}(0)\), \(\|x(t,t_0,x_0)\|=O\!\left((t-t_0)^{-1/m}\right)\) as \(t\to\infty\), where \(r_{t_0}\) is as in Definition~\ref{def:degree-m-estimate}.
\end{proposition}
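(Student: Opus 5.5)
The plan is to read the algebraic rate directly off the envelope in~\eqref{eq:degree-m-envelope}. Fix \(t_0\ge0\) and \(x_0\in B_{r_{t_0}}(0)\), and write \(\rho:=\|x_0\|\), \(\alpha:=\alpha_{t_0}\), \(c:=c_{t_0}\). If \(\rho=0\), uniqueness of solutions together with \(f(t,0)=0\) gives \(x(t,t_0,0)\equiv0\). The claim is then trivial, so I will assume \(\rho>0\).

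By Definition~\ref{def:degree-m-estimate} with \(m>0\), for all \(t\ge t_0\),
\[
\|x(t,t_0,x_0)\|\le c\,\rho\,(1+m\alpha\rho^m (t-t_0))^{-1/m}.
\]
I will discard the additive \(1\), using \(1+m\alpha\rho^m s\ge m\alpha\rho^m s\) for \(s=t-t_0>0\). Since the map \(u\mapsto u^{-1/m}\) is decreasing on \((0,\infty)\), this gives
\[
\|x(t,t_0,x_0)\|\le c\,\rho\,(m\alpha\rho^m)^{-1/m}(t-t_0)^{-1/m}=c\,(m\alpha)^{-1/m}(t-t_0)^{-1/m}
\]
for all \(t>t_0\).

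The \(O\)-bound therefore holds with constant \(C=c_{t_0}(m\alpha_{t_0})^{-1/m}\). It is valid for every \(t>t_0\), and in particular for all sufficiently large \(t\), which is what the definition of \(O\) in the introduction requires. It is worth noting that the \(\rho\)-dependence cancels, so \(C\) is even independent of \(x_0\). No step is a real obstacle. The only points needing care are the degenerate case \(x_0=0\) and restricting to \(t>t_0\) so that \((t-t_0)^{-1/m}\) is defined.
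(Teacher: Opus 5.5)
Your proof is correct and takes the same approach as the paper: both read the rate directly off the degree-\(m\) envelope. The paper notes that \(c_{t_0}\|x_0\|(1+m\alpha_{t_0}\|x_0\|^m(t-t_0))^{-1/m}\) is asymptotic to \(c_{t_0}(m\alpha_{t_0})^{-1/m}(t-t_0)^{-1/m}\), whereas you drop the additive \(1\), which gives the same constant as an explicit bound valid for every \(t>t_0\) and independent of \(x_0\).
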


\begin{proof}
Fix \(t_0\ge0\) and let the corresponding constants in~\eqref{eq:degree-m-estimate} be
\(\alpha_{t_0},c_{t_0},r_{t_0}>0\). If \(x_0=0\), the claim is trivial. If
\(x_0\neq0\), then
\(\|x(t,t_0,x_0)\|\le c_{t_0}\|x_0\|(1+m\alpha_{t_0}\|x_0\|^m(t-t_0))^{-1/m}\),
and the right-hand side is asymptotic to
\(c_{t_0}(m\alpha_{t_0})^{-1/m}(t-t_0)^{-1/m}\). Hence
\(\|x(t,t_0,x_0)\|=O((t-t_0)^{-1/m})\). Setting \(m=1/\beta\) gives
the stated correspondence with algebraic decay bounds \(t^{-\beta}\).
\end{proof}

Thus the degree \(m\) corresponds to the algebraic decay exponent
\(\beta=1/m\); equivalently, an algebraic decay bound of order \(t^{-\beta}\) corresponds to \(m=1/\beta\), with larger \(m\) indicating slower convergence. Having
established this rate interpretation, we now define a unique degree associated with an equilibrium by considering the smallest admissible rate index.

\begin{definition}[Lower admissible degree]
\label{def:lower-admissible-degree}
Let \(\mathcal D\subseteq\mathbb R_{\ge0}\) denote the set of admissible degrees,
that is, \(\mathcal D:=\{q\in\mathbb R_{\ge0}:\text{the origin admits a
degree-}q\text{ estimate}\}\). If \(\mathcal D\neq\emptyset\), the
\emph{lower admissible degree} of the origin is
\(\underline m:=\inf\mathcal D\).

Similarly, let \(\mathcal D_u\subseteq\mathbb R_{\ge0}\) denote the set of uniform
admissible degrees, that is,
\(\mathcal D_u:=\{q\in\mathbb R_{\ge0}:\text{the origin admits a uniform
degree-}q\text{ estimate}\}\). If \(\mathcal D_u\neq\emptyset\), the
\emph{uniform lower admissible degree} of the origin is
\(\underline m_u:=\inf\mathcal D_u\).
\end{definition}

Since every uniform estimate is also an estimate,
\(\mathcal D_u\subseteq\mathcal D\). Hence, whenever both sets are
nonempty, \(\underline m\le\underline m_u\). Moreover, for \(q>m\),
\(m\in\mathcal D\) implies \(q\in\mathcal D\), and
\(m\in\mathcal D_u\) implies \(q\in\mathcal D_u\). Indeed, on
\(0\le\rho\le r\), choosing
\(\alpha_q=m\alpha/(q r^{q-m})\) for \(m>0\) gives
\(\phi_m^\alpha(t;\rho)\le\phi_q^{\alpha_q}(t;\rho)\), for \(m=0\), exponential decay is bounded by every algebraic envelope.
Thus, admissibility of degree \(m\) implies admissibility of every
degree \(q>m\).

\begin{definition}[Degree-\(m\) stability]
\label{def:vector-degree-m-stability}
Suppose that \(\mathcal D\neq\emptyset\). The origin is said to be
\emph{degree-\(m\) stable} if \(m=\underline m\) and \(m\in\mathcal D\).
Equivalently, the origin is degree-\(m\) stable if \(m=\min\mathcal D\).

Similarly, suppose that \(\mathcal D_u\neq\emptyset\). The origin is said to be
\emph{uniformly degree-\(m\) stable} if \(m=\underline m_u\) and
\(m\in\mathcal D_u\). Equivalently, the origin is uniformly degree-\(m\)
stable if \(m=\min\mathcal D_u\).
\end{definition}

\begin{remark}
\label{rem:nonattainment-infinite-degree}
The lower admissible degree \(\underline m=\inf\mathcal D\) is well-defined whenever \(\mathcal D\neq\emptyset\), since \(\mathcal D\subseteq\mathbb R_{\ge0}\) is bounded below by zero, but the infimum need not be attained. For example, a rate \(\|x(t)\|\asymp(\log t)t^{-1/m_*}\), \(m_*>0\), is not bounded by a
degree-\(m_*\) envelope but is bounded by degree-\(q\) envelopes for every
\(q>m_*\), giving \(\mathcal D=(m_*,\infty)\).

This finite nonattainment is distinct from the case of an asymptotically
stable origin with \(\mathcal D=\emptyset\), for which no finite degree
estimate holds. In this case, one may formally assign \(\underline m=\infty\) to represent convergence slower than every algebraic rate, such as
\(1/\log t\). For an origin that is not asymptotically stable, no stability
degree is assigned.
\end{remark}

\begin{remark}
\label{rem:smooth-scalar-finite-leading-order}
Suppose a smooth scalar autonomous system satisfies
\(\dot x=\alpha x^k+o(x^k)\), \(\alpha\neq0\), near a two-sided locally
asymptotically stable origin. Then \(k\) is odd and \(\alpha<0\). If \(k=1\),
the origin has degree \(0\); if \(k>1\), then
\(|x(t)|\asymp t^{-1/(k-1)}\), so
\(\mathcal D=[k-1,\infty)\) and \(\underline m=k-1\in\mathcal D\).
Thus, smooth scalar autonomous systems with a finite nonzero Taylor leading term realize only even integer attained degrees. Flat systems are not covered and may exhibit non-polynomial
or subalgebraic decay. This is consistent with the classical homogeneous
approximation principle~\cite{hahn1967stability}.
\end{remark}



\section{Lyapunov Characterization}
We now turn from trajectory-based definitions to Lyapunov certificates. We first develop a direct Lyapunov method and illustrate it on normal forms, vector systems, and time-varying systems. We then establish a nonsmooth converse characterization of admissible degree-\(m\) estimates.

\subsection{Direct Lyapunov Method for Degree-\texorpdfstring{$m$}{m} Stability}
This subsection presents Lyapunov conditions that certify the trajectory envelopes introduced above. An upper dissipation inequality certifies an admissible degree, while a matching lower inequality on a forward-invariant set rules out faster degrees and certifies the exact degree. The following theorem formalizes this idea.

\begin{theorem}
\label{thm:lyapunov-degree-m}
Consider system~\eqref{eq:system} and let \(m\in\mathbb R_{\ge0}\).
Suppose there exist a continuously differentiable function
\(V:\mathbb R_{\ge0}\times D\to\mathbb R\), where
\(D\subseteq\mathbb R^n\) is a neighborhood of the origin, and constants
\(a,b,\mu,r>0\) such that \(B_r(0)\subseteq D\) and, for all \(t\ge0\) and
\(x\in B_r(0)\),
\begin{equation}
\label{eq:direct-quadratic-bounds}
    a\|x\|^2\le V(t,x)\le b\|x\|^2,
\end{equation}
and
\begin{equation}
\label{eq:direct-upper-dissipation}
    \dot V(t,x)\le -\mu V(t,x)^{1+\frac m2}.
\end{equation}
Then the origin admits a uniform degree-\(m\) estimate as in
Definition~\ref{def:degree-m-estimate}; equivalently,
\(m\in\mathcal D_u\). In particular, if \(m=0\), then \(0=\min\mathcal D_u\), and the origin is uniformly degree-\(0\) stable.

If \(m>0\) and, in addition, there exists a set \(S\subseteq B_r(0)\),
forward invariant in the sense that \(x_0\in S\) implies
\(x(t,t_0,x_0)\in S\) for all \(t_0\ge0\) and \(t\ge t_0\), such that
\(0\in\overline{S\setminus\{0\}}\), together with a constant \(\nu>0\)
such that
\begin{equation}
\label{eq:lyapunov-degree-m-lower}
    \dot V(t,x)\ge -\nu V(t,x)^{1+\frac m2}
\end{equation}
for all \(t\ge0\) and \(x\in S\setminus\{0\}\), then \(m=\min\mathcal D_u\), and the origin is uniformly degree-\(m\) stable as in Definition~\ref{def:vector-degree-m-stability}.
\end{theorem}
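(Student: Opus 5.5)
The plan is a comparison-lemma argument applied to \(v(t):=V(t,x(t,t_0,x_0))\), followed by a translation back to \(\|x\|\) through the quadratic bounds~\eqref{eq:direct-quadratic-bounds}.

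\textbf{Upper estimate.} Let \(x(t)=x(t,t_0,x_0)\).
\begin{itemize}
\item First secure invariance of a sublevel set. Since \(\dot V\le0\) on \(B_r(0)\), choosing \(\|x_0\|<r_0:=r\sqrt{a/b}\) gives \(a\|x(t)\|^2\le v(t)\le v(t_0)\le b\|x_0\|^2<ar^2\). A standard first-exit-time argument then keeps the solution in \(B_r(0)\) for all \(t\ge t_0\), which also gives existence on \([t_0,\infty)\).
\item Along the solution, \(\dot v\le-\mu v^{1+m/2}\).
\item For \(m=0\), Gr\"onwall gives \(v(t)\le v(t_0)e^{-\mu(t-t_0)}\). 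Hence \(\|x(t)\|\le\sqrt{b/a}\,\|x_0\|e^{-\mu(t-t_0)/2}\), which is \(\phi_0^{\mu/2}\) with \(c=\sqrt{b/a}\).
\item For \(m>0\) and \(v>0\) (if \(v\) ever vanishes, uniqueness keeps it at zero), differentiate \(w:=v^{-m/2}\). This gives \(\dot w\ge \tfrac{m}{2}\mu\), so \(v(t)\le v_0(1+\tfrac m2\mu v_0^{m/2}(t-t_0))^{-2/m}\).
\item The right-hand side is increasing in \(v_0\), so we may substitute \(v_0\le b\|x_0\|^2\). Taking square roots yields
\[\|x(t)\|\le\sqrt{b/a}\,\|x_0\|\bigl(1+\tfrac m2\mu b^{m/2}\|x_0\|^m(t-t_0)\bigr)^{-1/m}.\]
This is \(c\,\phi_m^{\alpha}(t-t_0;\|x_0\|)\) with \(\alpha=\mu b^{m/2}/2\).
\end{itemize}
All constants are independent of \(t_0\), so \(m\in\mathcal D_u\). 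For \(m=0\), since \(\mathcal D_u\subseteq\mathbb R_{\ge0}\), we get \(0=\min\mathcal D_u\).

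\textbf{Lower estimate and exactness.} Assume \(m>0\) and the set \(S\) exists. Pick \(x_0\in S\setminus\{0\}\) with \(\|x_0\|<r_0\); such a point exists because \(0\in\overline{S\setminus\{0\}}\). Fix \(t_0=0\).
\begin{itemize}
\item By forward invariance and uniqueness, \(x(t)\in S\setminus\{0\}\) for all \(t\). Indeed, the trajectory cannot reach \(0\) in finite time, since \(0\) is an equilibrium.
\item The reverse computation gives \(\dot w\le\tfrac m2\nu\), so \(v(t)\ge v_0(1+\tfrac m2\nu v_0^{m/2}t)^{-2/m}\).
\item Using \(\|x\|^2\ge v/b\), this gives \(\|x(t)\|\ge C t^{-1/m}\) for large \(t\), with \(C>0\). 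Together with the upper bound, \(\|x(t)\|\asymp t^{-1/m}\).
\end{itemize}

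Now suppose some \(q<m\) lies in \(\mathcal D_u\). Exploiting the smallness of \(S\) near \(0\), we may take \(\|x_0\|\) below \(r_{0}\) for that estimate as well. Two cases arise:
\begin{itemize}
\item If \(q>0\), Proposition~\ref{prop:degree-parameter-algebraic-decay} gives \(\|x(t)\|=O(t^{-1/q})\). Since \(1/q>1/m\), this contradicts \(\|x(t)\|\ge Ct^{-1/m}\).
\item If \(q=0\), exponential decay contradicts the same lower bound.
\end{itemize}
This argument in fact excludes \(q<m\) from the larger set \(\mathcal D\), hence from \(\mathcal D_u\). Therefore \(m=\min\mathcal D_u\), and the origin is uniformly degree-\(m\) stable.

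\textbf{Main obstacle.} The delicate points are bookkeeping rather than analysis:
\begin{itemize}
\item the invariance argument keeping trajectories in \(B_r(0)\), where the dissipation inequalities hold;
\item justifying the differentiation of \(v^{-m/2}\) (positivity of \(v\) and the chain rule, since \(V\) is \(C^1\));
\item making the choice of \(x_0\) simultaneously small enough for the hypothetical faster estimate's radius and inside \(S\setminus\{0\}\).
\end{itemize}
The last of these is exactly what \(0\in\overline{S\setminus\{0\}}\) provides.
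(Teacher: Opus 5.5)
Your proposal is correct and follows essentially the paper's proof: a first-exit-time argument for forward invariance, the substitution $w=v^{-m/2}$ to obtain the upper and lower comparison bounds, and a growth-rate contradiction against any $q<m$ along a trajectory starting in $S\setminus\{0\}$. The differences are cosmetic. You substitute $v(t_0)\le b\|x_0\|^2$ using monotonicity of $y\mapsto y\bigl(1+\tfrac m2\mu y^{m/2}(t-t_0)\bigr)^{-2/m}$, which gives $\alpha=\tfrac{\mu}{2}b^{m/2}$ instead of the paper's $\tfrac{\mu}{2}a^{m/2}$. You cite Proposition~\ref{prop:degree-parameter-algebraic-decay} rather than computing the rate ratio directly. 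You also correctly observe that the exclusion already holds for $\mathcal D$, not just for $\mathcal D_u$.
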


\begin{proofsketch}
Choose any \(\bar r\in(0,r)\). By restricting the initial conditions to
\(B_{r_0}(0)\), where
\(r_0=\bar r\sqrt{a/b}\), we have
\(V(t_0,x_0)<br_0^2=a\bar r^2\).
If \(t_e\) were the first exit time from \(B_{\bar r}(0)\), then
\(\dot V\le0\) and \eqref{eq:direct-quadratic-bounds} would give
\(a\bar r^2\le V(t_e,x(t_e))\le V(t_0,x_0)<a\bar r^2\), a contradiction.
Hence \(\|x(t,t_0,x_0)\|<\bar r<r\) and
\(x(t,t_0,x_0)\in B_r(0)\) for all \(t\ge t_0\).
Let \(v(t):=V(t,x(t,t_0,x_0))\). For \(m=0\), the inequality
\(\dot v\le-\mu v\), together with~\eqref{eq:direct-quadratic-bounds}, yields
\(
\|x(t,t_0,x_0)\|
\le
\sqrt{\frac ba}\,
\phi_0^{\mu/2}(t-t_0;\|x_0\|).\) For \(m>0\), comparison with
\(\dot v=-\mu v^{1+m/2}\) gives
\(
v(t)\le v(t_0)
\left(
1+\frac m2\mu v(t_0)^{m/2}(t-t_0)
\right)^{-2/m}.
\)
Using~\eqref{eq:direct-quadratic-bounds},
\(
\|x(t,t_0,x_0)\|
\le
\sqrt{\frac ba}\,
\phi_m^\alpha(t-t_0;\|x_0\|),\)
\(
\alpha=\frac{\mu}{2}a^{m/2}
\)
by~\eqref{eq:degree-m-envelope}. Thus \(m\in\mathcal D_u\). If \(m=0\), then
\(\mathcal D_u\subseteq\mathbb R_{\ge0}\) immediately gives
\(0=\min\mathcal D_u\), proving uniform degree-\(0\) stability.
For \(m>0\), it remains to rule out admissible degrees below \(m\).

Under the additional hypothesis, fix any \(t_0\ge0\). Since
\(0\in\overline{S\setminus\{0\}}\), there are arbitrarily small nonzero
\(x_0\in S\), and forward invariance ensures that the corresponding
trajectories remain in \(S\). Applying~\eqref{eq:lyapunov-degree-m-lower} along such a trajectory and using~\eqref{eq:direct-quadratic-bounds} gives
\(
\|x(t,t_0,x_0)\|
\ge
\sqrt{\frac ab}\,\|x_0\|
\left(
1+\frac m2\nu a^{m/2}\|x_0\|^m(t-t_0)
\right)^{-1/m}.
\)
This is incompatible with any uniform degree-\(q\) estimate~\eqref{eq:degree-m-estimate} with \(q<m\): for \(q=0\), the corresponding upper bound decays exponentially, while
for \(0<q<m\), the ratio between the asymptotic lower and upper rates is
\(
\frac{(t-t_0)^{-1/m}}{(t-t_0)^{-1/q}}
=(t-t_0)^{1/q-1/m}\to\infty,
\)
which is impossible. Hence
\(\mathcal D_u\cap[0,m)=\emptyset\). Since \(m\in\mathcal D_u\), it follows
that \(\underline m_u=m=\min\mathcal D_u\), proving uniform degree-\(m\)
stability. The detailed proof is provided in Appendix~\ref{app:proof-direct}.
\end{proofsketch}

\begin{remark}[Nonuniform version]
\label{rem:nonuniform-lyapunov-version}
Theorem~\ref{thm:lyapunov-degree-m} extends directly to the nonuniform case
by allowing the Lyapunov bounds and dissipation constants to depend on
\(t_0\). The upper inequality~\eqref{eq:direct-upper-dissipation} gives
\(m\in\mathcal D\), while the matching lower
inequality~\eqref{eq:lyapunov-degree-m-lower} on an invariant set accumulating at the origin
yields \(m=\min\mathcal D\) and hence degree-\(m\) stability.
\end{remark}

\begin{remark}
More generally, bounds \(V(t,x)\asymp\|x\|^p\) together with
\(\dot V\le-\mu V^{1+\frac mp}\) yield a degree-\(m\) estimate. For arbitrary
\(p\), however, such bounds may introduce regularity issues, since
\(\|x\|^p\) is smooth at the origin only for even integer \(p\). We therefore
use the standard quadratic scaling \(V(t,x)\asymp\|x\|^2\), for which
\(\dot V\le-\mu V^{1+\frac m2}\) extends the exponential condition
\(\dot V\le-\mu V\), recovered at \(m=0\).
\end{remark}


\subsection{Examples}

We now apply Theorem~\ref{thm:lyapunov-degree-m} to several representative
systems. The first two examples revisit the Hopf-type normal forms from the
introduction and show how the degree changes as lower-order stabilizing terms
vanish. The last two examples illustrate the role of invariant sets in autonomous and time-varying systems.

\begin{example}
\label{ex:hopf}
Consider the Cartesian Hopf normal form
\[
\begin{aligned}
    \dot x_1
    &=
    \lambda x_1-\omega x_2
    +\alpha(x_1^2+x_2^2)x_1
    -\beta(x_1^2+x_2^2)x_2,\\
    \dot x_2
    &=
    \omega x_1+\lambda x_2
    +\beta(x_1^2+x_2^2)x_1
    +\alpha(x_1^2+x_2^2)x_2,
\end{aligned}
\]
where \(\omega\neq0\). Equivalently, in polar coordinates,
\(\dot r=r(\lambda+\alpha r^2)\) and \(\dot\theta=\omega+\beta r^2\). This example represents the standard transition from linear radial damping to
nonlinear radial damping at the critical Hopf parameter. The parameters
\(\omega\) and \(\beta\) affect only the angular motion and do not affect the
radial decay rate.

Let \(V(x)=\frac12(x_1^2+x_2^2)\). Then
\(\dot V=2\lambda V+4\alpha V^2\). If \(\lambda<0\), then
\(\dot V\le -\mu V\) locally for some \(\mu>0\), so
Theorem~\ref{thm:lyapunov-degree-m} gives \(0=\min\mathcal D_u\), and the
origin is uniformly degree-\(0\) stable. If \(\lambda=0\) and
\(\alpha<0\), then \(\dot V=-4|\alpha|V^2\), which satisfies both
\eqref{eq:direct-upper-dissipation} and \eqref{eq:lyapunov-degree-m-lower}
on every sufficiently small positively invariant neighborhood. Hence \(2=\min\mathcal D_u\), and the
origin is uniformly degree-\(2\) stable by Theorem~\ref{thm:lyapunov-degree-m}. Therefore, the Hopf normal form
transitions from exponential degree \(0\) to algebraic degree \(2\),
corresponding to decay \(\asymp(t-t_0)^{-1/2}\), when the linear radial
coefficient vanishes.

This critical case also provides a direct comparison with the rational metric
\eqref{eq:rational} of~\cite{jagt2026lyapunov}: fixing \(p=2\) gives
\(k=2|\alpha|\) and \(M=1\), while the corresponding bound with \(p=1\)
cannot hold. Since \(p\) is prescribed
a priori, this does not identify or certify the minimal admissible order by itself. Our framework, however, excludes every \(q<2\) through the matching lower bound \eqref{eq:lyapunov-degree-m-lower}, yielding \(2=\min\mathcal D_u\) and \(r(t) \asymp (t-t_0)^{-1/2}\).
\end{example}

\begin{example}
\label{ex:bautin}
Consider the generalized Hopf, or Bautin, normal form
\[
\begin{aligned}
    \dot x_1&=-x_2+x_1\bigl(\beta_1+\beta_2(x_1^2+x_2^2)
        -a(x_1^2+x_2^2)^2\bigr),\\
    \dot x_2&=x_1+x_2\bigl(\beta_1+\beta_2(x_1^2+x_2^2)
        -a(x_1^2+x_2^2)^2\bigr),
\end{aligned}
\,\,\, a>0.
\]
Equivalently, \(\dot r=r(\beta_1+\beta_2r^2-ar^4)\) and \(\dot\theta=1\).
This normal form exhibits successive loss of the linear and cubic stabilizing
radial terms.

With \(V(x)=\frac12(x_1^2+x_2^2)\), one has
\(\dot V=2\beta_1V+4\beta_2V^2-8aV^3\). If \(\beta_1<0\), then
\(\dot V\le -\mu V\) locally, so Theorem~\ref{thm:lyapunov-degree-m} gives
\(0=\min\mathcal D_u\), and the origin is uniformly degree-\(0\) stable.
If \(\beta_1=0\) and \(\beta_2<0\), then
\(\dot V=-(4|\beta_2|+8aV)V^2\). On a sufficiently small positively
invariant neighborhood,
\(-C_2V^2\le\dot V\le-C_1V^2\) for some \(C_1,C_2>0\), which satisfies
both \eqref{eq:direct-upper-dissipation} and
\eqref{eq:lyapunov-degree-m-lower}. Hence by
Theorem~\ref{thm:lyapunov-degree-m}, \(2=\min\mathcal D_u\), and the
origin is uniformly degree-\(2\) stable. Finally, if
\(\beta_1=\beta_2=0\), then \(\dot V=-8aV^3\), which satisfies both
inequalities with \(m=4\). Theorem~\ref{thm:lyapunov-degree-m} therefore gives \(4=\min\mathcal D_u\), and the origin is uniformly degree-\(4\) stable. Consequently, the Bautin normal form separates stable cases with degrees
\(0\), \(2\), and \(4\), corresponding respectively to exponential decay
and algebraic decay of orders \(\asymp(t-t_0)^{-1/2}\) and
\(\asymp(t-t_0)^{-1/4}\).
\end{example}

In the next example, the slow dynamics lie on a curved invariant manifold,
which illustrates how the invariant-set condition in
Theorem~\ref{thm:lyapunov-degree-m} certifies exactness.

\begin{example}
\label{ex:curved-slow-set}
Consider
\[
\begin{aligned}
    \dot x_1
    &=
    -\bigl(x_1-(x_2^2+x_3^2)\bigr)
    -2(x_2^2+x_3^2)^{7/4},\\
    \dot x_2
    &=
    -\bigl(1+x_1-(x_2^2+x_3^2)\bigr)x_3
    -(x_2^2+x_3^2)^{3/4}x_2,\\
    \dot x_3
    &=
    \bigl(1+x_1-(x_2^2+x_3^2)\bigr)x_2
    -(x_2^2+x_3^2)^{3/4}x_3 .
\end{aligned}
\]
The curved set \(S=\{x\in\mathbb R^3:x_1=x_2^2+x_3^2\}\) carries the slow
algebraic dynamics, while transverse deviations decay exponentially.

Let \(r=\sqrt{x_2^2+x_3^2}\), and define \(y=x_1-r^2\). Since \(\dot r=-r^{5/2}\), one obtains
\(\dot y=\dot x_1-2r\dot r=-y\). Thus, in coordinates \((y,r,\theta)\), the
system satisfies \(\dot y=-y\), \(\dot r=-r^{5/2}\), and
\(\dot\theta=1+y\). Let \(V(x)=\frac12((x_1-r^2)^2+r^2)=\frac12(y^2+r^2)\).
Near the origin, \(V\) is positive definite and equivalent to \(\|x\|^2\).
Moreover, \(\dot V=y\dot y+r\dot r=-y^2-r^{7/2}\). On a sufficiently small
neighborhood, \(y^2\ge |y|^{7/2}\), so
\(-\dot V\ge |y|^{7/2}+r^{7/2}\ge C(y^2+r^2)^{7/4}=C(2V)^{7/4}\) for some
\(C>0\). Therefore \(\dot V\le-\mu V^{7/4}\). Since
\(7/4=1+m/2\) corresponds to \(m=3/2\),
\eqref{eq:direct-upper-dissipation} holds. Moreover, \(S=\{y=0\}\) is
invariant because \(\dot y=-y\). On \(S\),
\(V=\frac12r^2\) and \(\dot V=-r^{7/2}=-2^{7/4}V^{7/4}\).
Hence~\eqref{eq:lyapunov-degree-m-lower} holds on
\(S\setminus\{0\}\), so Theorem~\ref{thm:lyapunov-degree-m} gives
\(3/2=\min\mathcal D_u\), and the origin is uniformly degree-\(3/2\) stable.
\end{example}

The final example applies the same theorem to a nonautonomous system. It combines a time-varying rotation with componentwise cubic damping,
so the radial decay is not written as an exact scalar normal form but is
controlled by norm inequalities.

\begin{example}
\label{ex:time-varying}
Consider the time-varying system
\[
\begin{aligned}
    \dot x_1&=-(1+e^{-t})x_2-(2+e^{-t})x_1^3,\\
    \dot x_2&=(1+e^{-t})x_1-(2+e^{-t})x_2^3 .
\end{aligned}
\]
Let \(V(x)=\frac12\|x\|^2\). The time-varying rotation terms cancel in
\(\dot V\), giving \(\dot V=-(2+e^{-t})(x_1^4+x_2^4)\). Since
\(2+e^{-t}\ge2\) and
\(x_1^4+x_2^4\ge\frac12(x_1^2+x_2^2)^2=2V^2\), we have
\(\dot V\le-4V^2\), so \eqref{eq:direct-upper-dissipation} holds with
\(m=2\).

For exactness, fix \(\rho>0\) sufficiently small and set
\(S=B_\rho(0)\). Since \(\dot V\le0\), \(S\) is forward invariant.
Moreover,
\(\dot V\ge-3(x_1^4+x_2^4)\ge-3(x_1^2+x_2^2)^2=-12V^2\).
Thus~\eqref{eq:lyapunov-degree-m-lower} holds on \(S\setminus\{0\}\), so \(2=\min\mathcal D_u\) and the
origin is uniformly degree-\(2\) stable by Theorem~\ref{thm:lyapunov-degree-m}. The same rate is visible from
\(r=\|x\|\): for \(r>0\),
\(\dot r=-(2+e^{-t})(x_1^4+x_2^4)/r\), and
\(\frac12r^4\le x_1^4+x_2^4\le r^4\) implies
\(-3r^3\le\dot r\le-r^3\), so
\(r(t)\asymp(t-t_0)^{-1/2}\).
\end{example}

Figure~\ref{fig:numerical-decay-verification} numerically confirms the
decay rates predicted in Examples~\ref{ex:hopf}--\ref{ex:time-varying}.
On log--log axes, algebraic decay of order \(t^{-1/m}\) has asymptotic slope
\(-1/m\), while exponential decay is faster than any power law. The numerical trajectories closely match the predicted reference rates.
Small deviations are expected, since the predicted degree fixes the asymptotic
decay exponent, but not the trajectory's scaling constants. The shift \(1+t\)
only avoids the singularity of the logarithmic axis at \(t=0\).

\begin{figure}[t]
    \centering

    \begin{subfigure}[t]{0.235\textwidth}
        \centering
        \includegraphics[width=\linewidth]{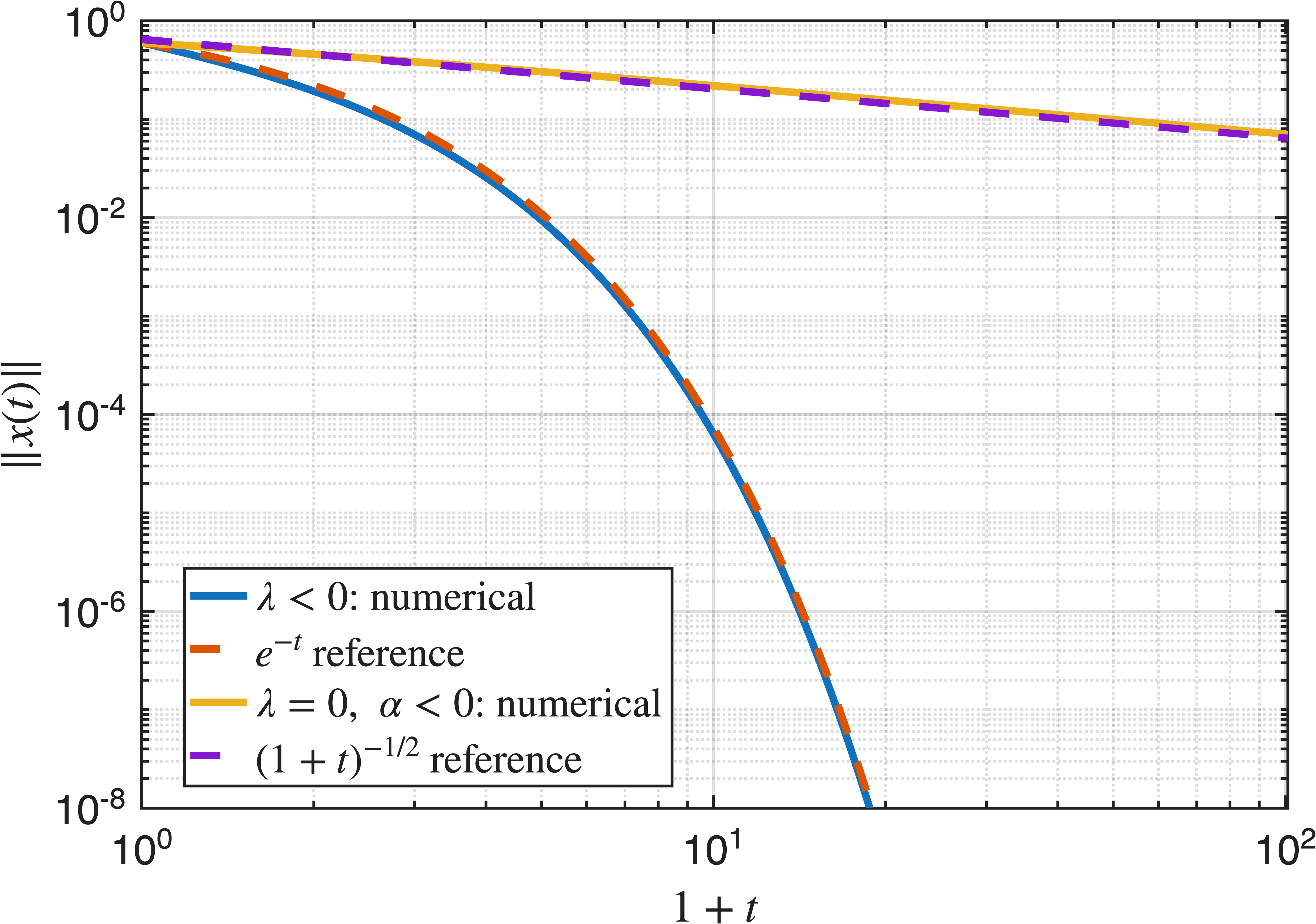}
        \caption{Example~\ref{ex:hopf}.}
        \label{fig:hopf-decay-2}
    \end{subfigure}
    \hfill
    \begin{subfigure}[t]{0.235\textwidth}
        \centering
        \includegraphics[width=\linewidth]{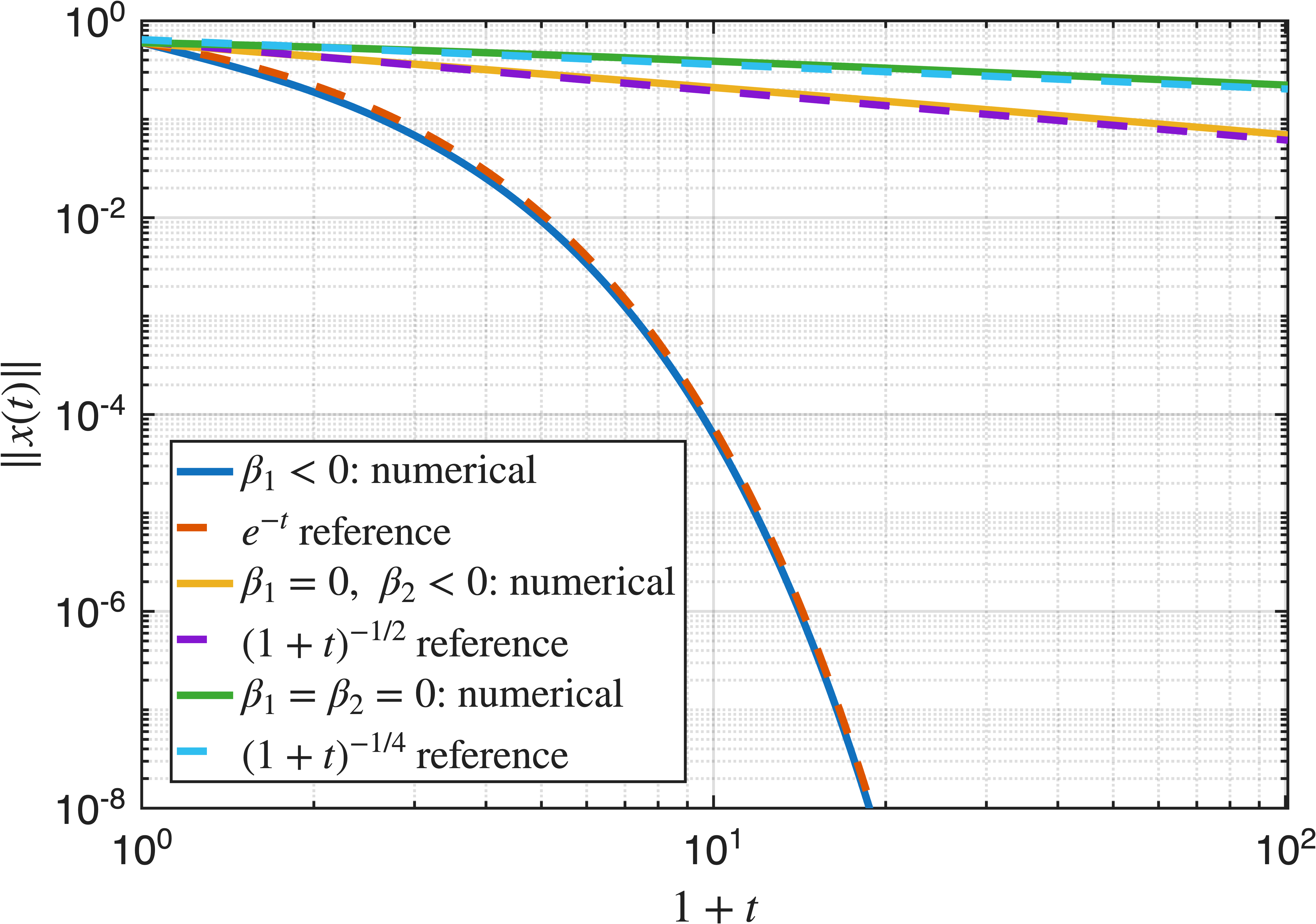}
        \caption{Example~\ref{ex:bautin}.}
        \label{fig:bautin-decay-2}
    \end{subfigure}
    \hfill
    \begin{subfigure}[t]{0.235\textwidth}
        \centering
        \includegraphics[width=\linewidth]{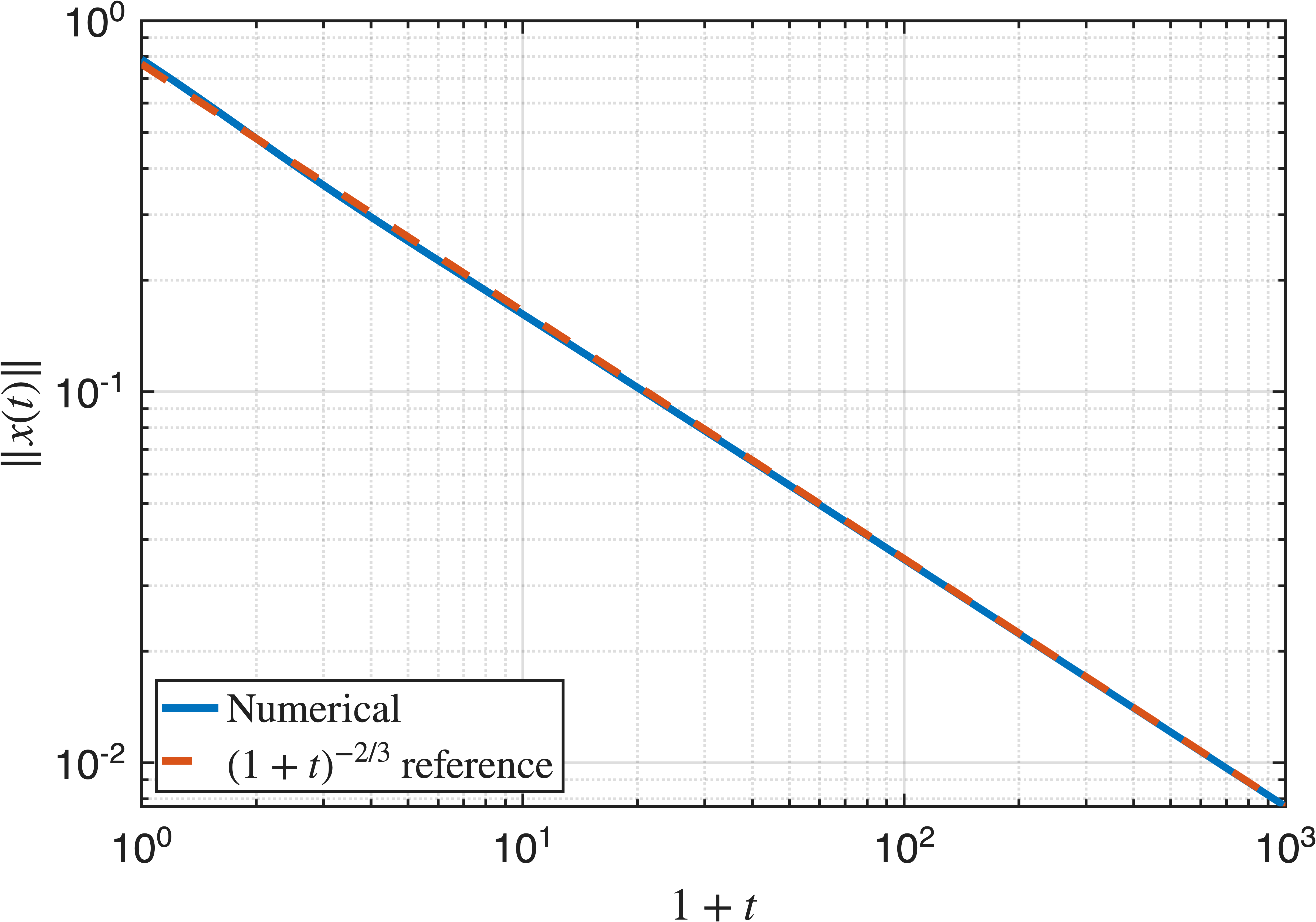}
        \caption{Example~\ref{ex:curved-slow-set}.}
        \label{fig:curved-slow-set-decay}
    \end{subfigure}
    \hfill
    \begin{subfigure}[t]{0.235\textwidth}
        \centering
        \includegraphics[width=\linewidth]{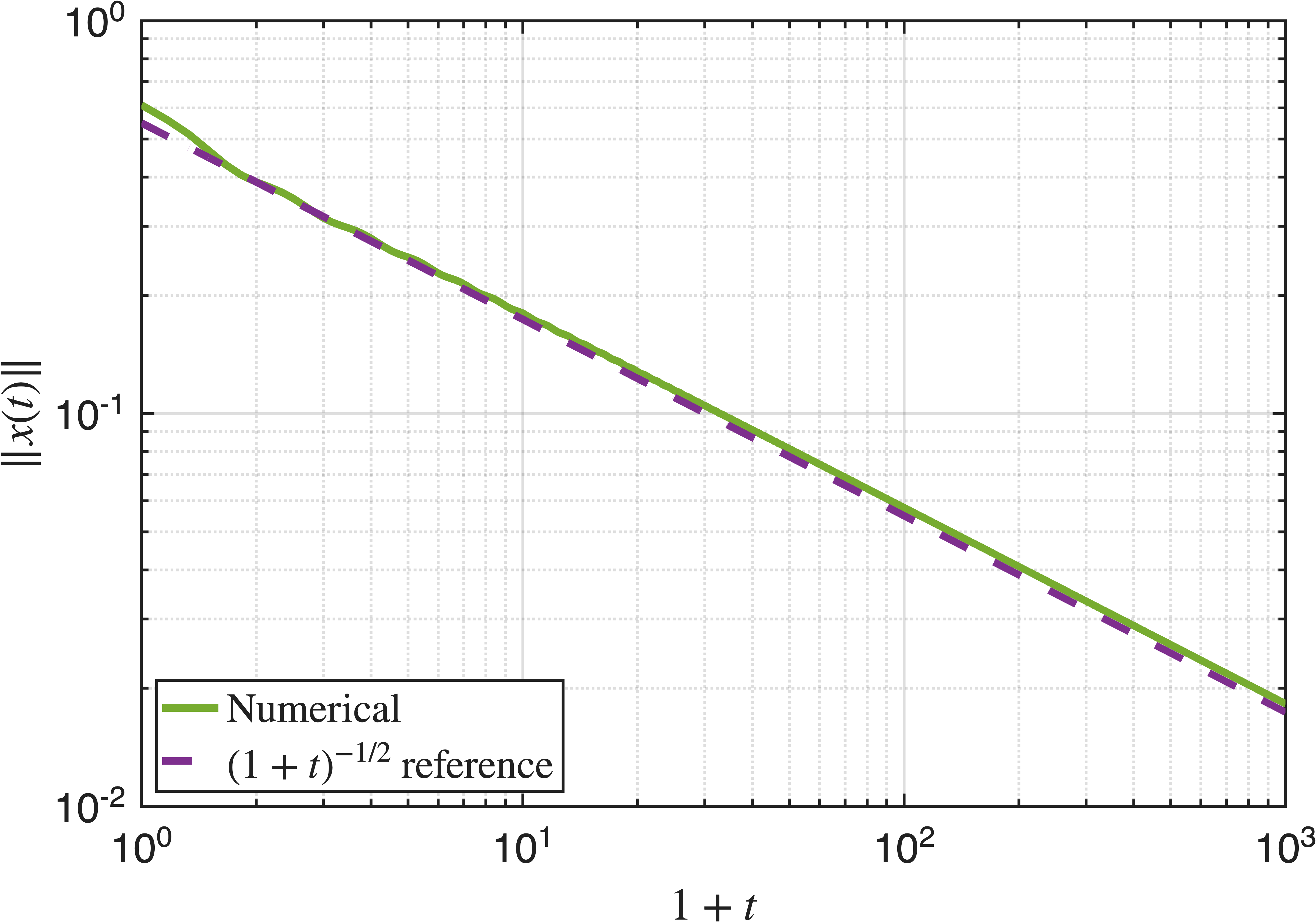}
        \caption{Example~\ref{ex:time-varying}.}
        \label{fig:time-varying-decay}
    \end{subfigure}

    \caption{
    Numerical verification of the decay rates certified in
    Examples~\ref{ex:hopf}--\ref{ex:time-varying}. Solid curves show
    \(\|x(t)\|\), while dashed curves show the predicted reference rates. The predicted and numerical rates are in near-exact agreement.}
    \label{fig:numerical-decay-verification}
\end{figure}

\subsection{Converse Lyapunov Characterization}

After the direct Lyapunov theorem, we state a converse result for admissible
degree-\(m\) decay. The converse Lyapunov function is obtained from the flow by a
supremum--integral construction. Although this construction yields a continuous
Lyapunov function, it need not yield a smooth one; hence the derivative
condition is stated using the upper Dini derivative along solutions. This result characterizes membership in
\(\mathcal D_u\), rather than exactness: exact degree stability still requires
showing that no smaller degree is admissible.

Throughout this subsection, let \(\Psi(t;t_0,x_0)\) denote the solution at
time \(t\) initialized from \(x_0\) at time \(t_0\). For continuous \(V:\mathbb{R}_{\ge0}\times B_r(0)\to\mathbb{R}_{\ge0}\),
define the upper Dini derivative along solutions by
\begin{equation}
\label{eq:dini-derivative}
    D_f^+V(t,x):=
    \limsup_{h\downarrow0}
    \frac{V(t+h,\Psi(t+h;t,x))-V(t,x)}{h}.
\end{equation}
If \(V\in C^1\), then
\(D_f^+V=\partial_tV+\nabla_xV\cdot f(t,x)\).



\begin{theorem}
\label{thm:time-varying-nonsmooth-admissible-degree-iff}
Consider system~\eqref{eq:system} and let \(m\in\mathbb R_{\ge0}\). Then \(m\in\mathcal D_u\), equivalently, the origin admits a uniform
degree-\(m\) estimate, if and only if there exist \(r>0\), a continuous
function
\(V:\mathbb R_{\ge0}\times B_r(0)\to\mathbb R_{\ge0}\), and constants
\(c_1,c_2,c_3>0\) such that, for all \(t\ge0\) and \(x\in B_r(0)\),
\begin{equation}
\label{eq:converse-quadratic-bounds}
    c_1\|x\|^2\le V(t,x)\le c_2\|x\|^2,
\end{equation}
and
\begin{equation}
\label{eq:converse-dini-dissipation}
    D_f^+V(t,x)
    \le
    -c_3V(t,x)^{1+\frac m2}.
\end{equation}
\end{theorem}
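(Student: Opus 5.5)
For sufficiency, I would adapt the first part of the proof of Theorem~\ref{thm:lyapunov-degree-m} to Dini derivatives. For necessity, I would build \(V\) by running the trajectory forward and then flowing its squared norm backward along the scalar comparison equation \(\dot z=-\beta z^{1+m/2}\).

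\textbf{Sufficiency.} Assume \eqref{eq:converse-quadratic-bounds}--\eqref{eq:converse-dini-dissipation}. Restrict initial data to \(B_{r_0}(0)\) with \(r_0=\bar r\sqrt{c_1/c_2}\) and \(\bar r<r\). Since \(D_f^+V\le0\), the first-exit-time argument from the proof sketch of Theorem~\ref{thm:lyapunov-degree-m} applies verbatim. Here I use the standard fact that a continuous function with nonpositive upper Dini derivative is nonincreasing. Hence trajectories stay in \(B_r(0)\).

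Set \(v(t)=V(t,\Psi(t;t_0,x_0))\). By the semigroup property of \(\Psi\), the upper right Dini derivative of \(v\) equals \(D_f^+V\) evaluated along the solution. So \(D^+v\le -c_3 v^{1+m/2}\). The comparison lemma for Dini derivatives, with \(v\) continuous, then gives
\[
v(t)\le v(t_0)\Bigl(1+\tfrac m2 c_3 v(t_0)^{m/2}(t-t_0)\Bigr)^{-2/m}
\]
for \(m>0\), and \(v(t)\le v(t_0)e^{-c_3(t-t_0)}\) for \(m=0\). Converting with \eqref{eq:converse-quadratic-bounds} exactly as before yields \(\|\Psi(t;t_0,x_0)\|\le\sqrt{c_2/c_1}\,\phi_m^{\alpha}(t-t_0;\|x_0\|)\) with \(\alpha=\tfrac{c_3}{2}c_1^{m/2}\). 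All constants are independent of \(t_0\), so \(m\in\mathcal D_u\).

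\textbf{Necessity.} Suppose the uniform estimate holds with constants \(\alpha,c,r\); we may take \(c\ge1\). Define the backward flow of \(\dot z=-\beta z^{1+m/2}\) by
\[
B_s(y)=\bigl(y^{-m/2}-\tfrac m2\beta s\bigr)^{-2/m}\quad (m>0),\qquad B_s(y)=ye^{\beta s}\quad (m=0),
\]
with \(B_s(0)=0\). Set
\[
V(t,x):=\sup_{s\ge0} B_s\bigl(\|\Psi(t+s;t,x)\|^2\bigr).
\]
\emph{Finiteness and the quadratic bounds.} Take \(x\in B_{r}(0)\), \(x\neq0\), and write \(y_s=\|\Psi(t+s;t,x)\|^2\). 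The estimate gives
\[
y_s^{-m/2}\ge c^{-m}\|x\|^{-m}+c^{-m}m\alpha s.
\]
With \(\beta:=c^{-m}\alpha\) we get \(\tfrac m2\beta s<c^{-m}m\alpha s\). Hence \(B_s(y_s)\) is well defined, and
\[
B_s(y_s)\le\bigl(c^{-m}\|x\|^{-m}+\tfrac m2 c^{-m}\alpha s\bigr)^{-2/m}\le c^2\|x\|^2 .
\]
For \(m=0\), take \(\beta=\alpha\); then \(B_s(y_s)\le c^2\|x\|^2e^{-\alpha s}\). Taking \(s=0\) gives the lower bound \(V\ge\|x\|^2\). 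So \eqref{eq:converse-quadratic-bounds} holds with \(c_1=1\) and \(c_2=c^2\).

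\emph{Dissipation.} Let \(F_h\) denote the forward flow of the comparison ODE for time \(h\). Then \(B_{s'-h}=F_h\circ B_{s'}\). Writing \(s'=s+h\),
\[
V(t+h,\Psi(t+h;t,x))=\sup_{s'\ge h}F_h\bigl(B_{s'}(y_{s'})\bigr)\le F_h\bigl(V(t,x)\bigr),
\]
because \(F_h\) is increasing. Therefore
\[
D_f^+V\le \tfrac{d}{dh}F_h(V)\big|_{h=0}=-\beta V^{1+m/2},
\]
which is \eqref{eq:converse-dini-dissipation} with \(c_3=\beta\).

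\textbf{Main obstacle: continuity of \(V\).} I expect this to be the hardest step, and I would prove it in three parts.

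First, I would shrink \(r\) so that all trajectories involved stay in the ball where the estimate and the local Lipschitz property apply.

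Second, I would truncate the supremum to a compact interval of \(s\), uniformly near a base point \((t^*,x^*)\) with \(x^*\neq0\). From the bounds above, the tail value \(B_s(y_s)\) is bounded by
\[
\bigl(c^{-m}\|x\|^{-m}+\tfrac m2\beta s\bigr)^{-2/m}\to0 \quad\text{as } s\to\infty .
\]
Since \(V\ge\|x\|^2\), this bound falls below \(\tfrac12\|x^*\|^2\) for \(s\ge T\), with \(T\) uniform near \(x^*\). So the supremum is taken over \([0,T]\).

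Third, on \([0,T]\) I would invoke continuous dependence of \(\Psi\) on \((t,x)\), uniformly on compact time intervals, together with continuity of \(B_s(y)\) in \((s,y)\). This gives continuity of \(V\) at \((t^*,x^*)\). Continuity at \(x=0\) follows directly from \(V\le c^2\|x\|^2\).
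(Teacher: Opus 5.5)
Your sufficiency argument is the paper's (first-exit argument plus the Dini comparison lemma). Your necessity direction is a genuinely different converse construction, and it is correct. With \(\beta=c^{-m}\alpha\), the backward flow \(B_s(y_s)\) never blows up and \(\|x\|^2\le V\le c^2\|x\|^2\). The identity \(B_{s'-h}=F_h\circ B_{s'}\), together with monotonicity of \(F_h\), gives \(V(t+h,\Psi(t+h;t,x))\le F_h(V(t,x))\), hence \(D_f^+V\le-\beta V^{1+m/2}\).

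The paper instead uses the supremum--integral function \eqref{eq:converse-V-construction-1}. It obtains \(D_f^+V\le -a\|x\|^{m+2}\) from a dynamic-programming inequality, and only then converts this to a power of \(V\) via the upper quadratic bound. With your \(c\) playing the role of the paper's \(K\), this gives \(c_2=K^2+aK^{m+2}/(2\alpha)\) and \(c_3=a c_2^{-(1+m/2)}\).

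Your route buys two things. First, cleaner constants: \(c_1=1\), \(c_2=c^2\), \(c_3=c^{-m}\alpha\). Second, and more substantively, an easier continuity proof. The tail values \(B_s(y_s)\) are bounded by \((\tfrac m2\beta s)^{-2/m}\) (or \(c^2r^2e^{-\alpha s}\) for \(m=0\)) uniformly in \(x\), while \(V\ge\|x\|^2\). So near any \(x^*\neq0\) the supremum is a maximum over a compact \(s\)-interval. In the paper's construction the supremum may only be approached as \(\tau\to\infty\), as in Example~\ref{ex:converse-critical-hopf} with \(a>2|\alpha|\). That is why the paper must separately prove continuity of the limiting value \(aI(t,x)\) and uniform convergence of the tails.

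The price of your approach is that it relies on the explicit semigroup of the scalar comparison ODE. It also needs the factor-of-two slack in \(\beta\) to keep the backward flow defined and the tail decaying. The paper's integral construction is more generic and yields a dissipation bound directly in terms of the state.
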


\begin{proofsketch}
For sufficiency, Theorem~\ref{thm:lyapunov-degree-m} applies with the
comparison lemma for upper Dini derivatives
(Lemma~3 in~\cite{moulay2008finite}) applied to
\eqref{eq:converse-dini-dissipation} in place of the comparison argument
for~\eqref{eq:direct-upper-dissipation}. Conversely, suppose \(m\in\mathcal D_u\), with uniform-estimate constants
\(\alpha,K,R>0\). Choose \(r>0\) such that \(Kr<R\), so that every trajectory
starting in \(B_r(0)\) remains in \(B_R(0)\). For \(s\ge0\), write
\(\Psi_s^{t,x}:=\Psi(t+s;t,x)\), fix \(a>0\), and define
\begin{equation}
\label{eq:converse-V-construction-1}
    V(t,x)
    :=
    \sup_{\tau\ge0}
    \left\{
        \|\Psi_\tau^{t,x}\|^2
        +
        a\int_0^\tau
        \|\Psi_s^{t,x}\|^{m+2}\,ds
    \right\}.
\end{equation}
By~\eqref{eq:degree-m-envelope} and~\eqref{eq:degree-m-estimate},
\(\|\Psi_s^{t,x}\|\le K\phi_m^\alpha(s;\|x\|)\). Hence
\(\|\Psi_\tau^{t,x}\|^2\le K^2\|x\|^2\), and direct integration of
\((\phi_m^\alpha(s;\|x\|))^{m+2}\) gives
\(\int_0^\tau\|\Psi_s^{t,x}\|^{m+2}\,ds
\le \frac{K^{m+2}}{2\alpha}\|x\|^2\) for all \(m\ge0\).
Since \(\tau=0\) in~\eqref{eq:converse-V-construction-1} gives \(V(t,x)\ge\|x\|^2\), it follows that
\(\|x\|^2\le V(t,x)\le\left(K^2+\frac{aK^{m+2}}{2\alpha}\right)\|x\|^2.
\) Thus \(V\) is finite and satisfies~\eqref{eq:converse-quadratic-bounds}.

We next establish continuity. At \(x=0\), it follows directly
from~\eqref{eq:converse-quadratic-bounds}. For each finite
\(T\), the supremum and integral over \([0,T]\) are continuous by continuous dependence of solutions. On any compact neighborhood \(U\) away from
\(x=0\), \eqref{eq:degree-m-envelope}--\eqref{eq:degree-m-estimate} give
\(\sup_{(t,x)\in U}\sup_{\tau\ge T}\|\Psi_\tau^{t,x}\|^2\to0\) and
\(\sup_{(t,x)\in U}\int_T^\infty\|\Psi_s^{t,x}\|^{m+2}ds\to0\)
as \(T\to\infty\). These bounds show that the portion with \(\tau\ge T\)
converges uniformly on \(U\) to
\(a\int_0^\infty\|\Psi_s^{t,x}\|^{m+2}ds\); hence \(V\) is continuous. The flow's semigroup property gives
\(
V(t+h,\Psi(t+h;t,x))
\le
V(t,x)
-a\int_0^h\|\Psi(t+s;t,x)\|^{m+2}\,ds.
\)
Dividing by \(h\) and letting \(h\downarrow0\) yields
\(
D_f^+V(t,x)
\le-a\|x\|^{m+2}
\le-c_3V(t,x)^{1+m/2},
\)
where
\(
c_3
=
a\left(
K^2+\frac{aK^{m+2}}{2\alpha}
\right)^{-(1+m/2)}.
\)
Thus~\eqref{eq:converse-dini-dissipation} holds, completing the necessity
direction. The detailed proof is provided in
Appendix~\ref{app:proof-converse}.
\end{proofsketch}

We next illustrate the converse construction~\eqref{eq:converse-V-construction-1} on the critical Hopf normal form from Example~\ref{ex:hopf}.

\begin{example}
\label{ex:converse-critical-hopf}
Consider the Hopf normal form in Example~\ref{ex:hopf} at the critical
parameters \(\lambda=0\), \(\alpha<0\). For a trajectory initialized at
\(x(0)=x\), let \(r(t):=\|x(t)\|\) and \(r_0:=r(0)=\|x\|\). Then
\(r(t)=r_0(1+2|\alpha|r_0^2t)^{-1/2}\), so \(2\in\mathcal D_u\). Using the construction~\eqref{eq:converse-V-construction-1}, define
\(V(x):=\sup_{\tau\ge0}\{r(\tau)^2+
a\int_0^\tau r(s)^4\,ds\}\), with \(a>0\). Since
\(\frac{d}{dt}r(t)^2=-2|\alpha|r(t)^4\), one has
\(\int_0^\tau r(s)^4\,ds=
(r_0^2-r(\tau)^2)/(2|\alpha|)\). Therefore the expression inside the
supremum equals
\(\frac{a}{2|\alpha|}r_0^2+
(1-\frac{a}{2|\alpha|})r(\tau)^2\). Since \(r(\tau)^2\) decreases from \(r_0^2\) to \(0\), the supremum is \(r_0^2\) when \(a\le2|\alpha|\) and \(\frac{a}{2|\alpha|}r_0^2\) when \(a>2|\alpha|\). Hence \(V(x)=C\|x\|^2\), where \(C:=\max\{1,a/(2|\alpha|)\}\). Consequently,
\(\dot V(x)=-2C|\alpha|\|x\|^4
=-(2|\alpha|/C)V(x)^2\).
Thus, the converse construction yields a quadratic Lyapunov function with
matching degree-\(2\) dissipation. 
\end{example}



The construction~\eqref{eq:converse-V-construction-1} also applies directly to the time-varying system in
Example~\ref{ex:time-varying}.

\begin{example}
\label{ex:converse-time-varying-planar-degree-2}
For the system in Example~\ref{ex:time-varying}, let
\(z(s):=\|\Psi(t+s;t,x)\|^2\). Differentiating along solutions gives
\(\dot z(s)=-2(2+e^{-(t+s)})
(x_1(t+s)^4+x_2(t+s)^4)\le-2z(s)^2\), where we used
\(x_1^4+x_2^4\ge\frac12(x_1^2+x_2^2)^2\).
For \(a\in(0,2]\), define the converse Lyapunov function from
\eqref{eq:converse-V-construction-1} as
\(W_a(t,x):=\sup_{\tau\ge0}\{z(\tau)+a\int_0^\tau z(s)^2\,ds\}\).
The expression inside the supremum is nonincreasing in \(\tau\), since
\(\dot z(\tau)+az(\tau)^2\le-(2-a)z(\tau)^2\le0\).
Hence its supremum is attained at \(\tau=0\), giving
\(W_a(t,x)=z(0)=\|x\|^2\). Moreover,
\(-6W_a^2\le\dot W_a\le-2W_a^2\), which gives the uniform degree-\(2\)
certificate and its matching lower bound.
\end{example}

\section{Conclusion and Future Work}
We introduced a degree-based framework for local asymptotic stability that
places exponential and algebraic convergence rates on a common scale. We
developed Lyapunov certificates for admissibility and exact stability degree,
together with a nonsmooth converse characterization of uniform admissible
degrees, and illustrated the framework on Hopf and Bautin normal forms, along with
fractional-degree and time-varying systems. Future work will
investigate degree computation from leading vector-field terms, smooth
converse Lyapunov certificates, and computational methods for certifying
stability degrees in nonlinear systems.


\bibliographystyle{IEEEtran}
\bibliography{references}

\onecolumn
\appendices
{

\setlength{\parindent}{0pt}
\setlength{\parskip}{0.2em}

\section{Proof of Theorem~\ref{thm:lyapunov-degree-m}}
\label{app:proof-direct}

\begin{proof}
Choose any \(\bar r\in(0,r)\), and set
\[
    r_0:=\bar r\sqrt{\frac{a}{b}}.
\]
Fix \(t_0\ge0\) and \(x_0\in B_{r_0}(0)\). Let
\[
    x(t):=x(t,t_0,x_0),
    \qquad
    v(t):=V(t,x(t)).
\]

We first verify that the trajectory remains in \(B_{\bar r}(0)\), and
hence in \(B_r(0)\), the region on which
\eqref{eq:direct-quadratic-bounds}--\eqref{eq:direct-upper-dissipation}
hold. As long as \(x(t)\in B_{\bar r}(0)\),
\eqref{eq:direct-upper-dissipation} gives \(\dot v(t)\le0\). Hence,
using the upper bound in \eqref{eq:direct-quadratic-bounds},
\[
    v(t)
    \le v(t_0)
    \le b\|x_0\|^2
    <br_0^2
    =a\bar r^2.
\]
If the trajectory were to leave \(B_{\bar r}(0)\), let \(t_e>t_0\) be its
first exit time. Then \(\|x(t_e)\|=\bar r<r\), and the lower bound in
\eqref{eq:direct-quadratic-bounds} gives
\[
    a\bar r^2
    \le V(t_e,x(t_e))
    =v(t_e)
    <a\bar r^2,
\]
a contradiction. Thus
\[
    x(t)\in B_{\bar r}(0),
    \qquad t\ge t_0.
\]
Since \(\bar r<r\), this implies
\begin{equation}
\label{eq:direct-forward-containment}
    x(t)\in B_r(0),
    \qquad t\ge t_0.
\end{equation}

Since \(V\) is continuously differentiable and \(x(t)\) solves
\eqref{eq:system}, the function \(v\) is absolutely continuous on
compact time intervals and, by
\eqref{eq:direct-upper-dissipation} and
\eqref{eq:direct-forward-containment},
\begin{equation}
\label{eq:direct-v-upper}
    \dot v(t)
    \le
    -\mu v(t)^{1+\frac m2}
\end{equation}
for almost every \(t\ge t_0\).

First consider \(m=0\). Then \eqref{eq:direct-v-upper} reduces to
\(\dot v\le-\mu v\). Multiplying by \(e^{\mu t}\) gives
\[
    \frac{d}{dt}\bigl(e^{\mu t}v(t)\bigr)
    =e^{\mu t}\bigl(\dot v(t)+\mu v(t)\bigr)\le0.
\]
Hence \(e^{\mu t}v(t)\le e^{\mu t_0}v(t_0)\), and therefore
\[
    v(t)\le v(t_0)e^{-\mu(t-t_0)}.
\]
Applying both inequalities in \eqref{eq:direct-quadratic-bounds} gives
\begin{equation}
\label{eq:direct-degree-zero-bound}
    \|x(t,t_0,x_0)\|
    \le
    \sqrt{\frac ba}\,
    \|x_0\|e^{-\frac{\mu}{2}(t-t_0)}.
\end{equation}
By the \(m=0\) case of \eqref{eq:degree-m-envelope},
\eqref{eq:direct-degree-zero-bound} is a uniform degree-\(0\)
estimate with
\[
    c=\sqrt{\frac ba},
    \qquad
    \alpha=\frac{\mu}{2}.
\]
Hence \(0\in\mathcal D_u\). Since \(\mathcal D_u\subseteq\mathbb R_{\ge0}\), it follows immediately that
\[
0=\min\mathcal D_u.
\]
Thus the origin is uniformly degree-\(0\) stable, and this case coincides
with the classical Lyapunov criterion for exponential stability.

Now suppose \(m>0\). If a trajectory reaches the origin at some finite
time \(t_*\), uniqueness for~\eqref{eq:system} implies that
\(x(t)=0\) for all \(t\ge t_*\), so the desired estimate holds trivially
thereafter. Otherwise, by~\eqref{eq:direct-quadratic-bounds},
\(v(t)>0\), and we may therefore work with negative powers of \(v(t)\). From \eqref{eq:direct-v-upper},
\[
    \frac{d}{dt}v(t)^{-m/2}
    =
    -\frac m2v(t)^{-1-\frac m2}\dot v(t)
    \ge
    \frac m2\mu.
\]
Integrating from \(t_0\) to \(t\) gives
\[
    v(t)^{-m/2}
    \ge
    v(t_0)^{-m/2}
    +\frac m2\mu(t-t_0).
\]
Since both sides are positive, taking the power \(-2/m\), which reverses
the inequality as \(-2/m < 0\), yields
\[
    v(t)
    \le
    \left(
        v(t_0)^{-m/2}
        +\frac m2\mu(t-t_0)
    \right)^{-2/m}.
\]
Factoring \(v(t_0)^{-m/2}\) from the parentheses gives
\[
    v(t)
    \le
    \left[
        v(t_0)^{-m/2}
        \left(
            1+\frac m2\mu v(t_0)^{m/2}(t-t_0)
        \right)
    \right]^{-2/m},
\]
and therefore
\begin{equation}
\label{eq:direct-v-upper-integrated}
    v(t)
    \le
    v(t_0)
    \left(
        1+\frac m2\mu v(t_0)^{m/2}(t-t_0)
    \right)^{-2/m}.
\end{equation}

Using~\eqref{eq:direct-quadratic-bounds}, we have
\[
    v(t_0)\le b\|x_0\|^2,
    \qquad
    v(t_0)^{m/2}\ge a^{m/2}\|x_0\|^m.
\]
Substituting these into~\eqref{eq:direct-v-upper-integrated} gives
\[
    v(t)
    \le
    b\|x_0\|^2
    \left(
        1+\frac m2\mu a^{m/2}
        \|x_0\|^m(t-t_0)
    \right)^{-2/m},
\]
where the lower bound on \(v(t_0)\) is used inside the parentheses because
the exponent \(-2/m\) is negative.

Using again the lower bound in
\eqref{eq:direct-quadratic-bounds}, now at time \(t\),
\[
    a\|x(t,t_0,x_0)\|^2\le v(t),
\]
and therefore
\[
    \|x(t,t_0,x_0)\|^2
    \le
    \frac ba\,\|x_0\|^2
    \left(
        1+\frac m2\mu a^{m/2}
        \|x_0\|^m(t-t_0)
    \right)^{-2/m}.
\]
Taking square roots yields
\begin{equation}
\label{eq:direct-upper-trajectory-estimate}
    \|x(t,t_0,x_0)\|
    \le
    \sqrt{\frac ba}\,\|x_0\|
    \left(
        1+\frac m2\mu a^{m/2}
        \|x_0\|^m(t-t_0)
    \right)^{-1/m}.
\end{equation}
Comparison of \eqref{eq:direct-upper-trajectory-estimate} with
\eqref{eq:degree-m-envelope} shows that the origin admits a uniform
degree-\(m\) estimate with
\[
    c=\sqrt{\frac ba},
    \qquad
    \alpha=\frac{\mu}{2}a^{m/2}.
\]
Thus \(m\in\mathcal D_u\), and consequently
\(\underline m_u\le m\).

We now prove the exactness statement. Assume \(m>0\), and let \(S\)
and \(\nu\) satisfy the additional hypotheses of
Theorem~\ref{thm:lyapunov-degree-m}. Fix \(x_0\in S\setminus\{0\}\). By forward invariance,
\(x(t)\in S\) for all \(t\ge t_0\), so
\eqref{eq:lyapunov-degree-m-lower} gives
\begin{equation}
\label{eq:direct-v-lower}
    \dot v(t)
    \ge
    -\nu v(t)^{1+\frac m2}.
\end{equation}
For \(v(t)>0\), multiplying~\eqref{eq:direct-v-lower} by
\(-\frac m2v(t)^{-1-\frac m2}<0\) reverses the inequality and gives
\[
    -\frac m2v(t)^{-1-\frac m2}\dot v(t)
    \le
    \frac m2\nu.
\]
By the chain rule, the left-hand side is precisely
\[
    \frac{d}{dt}v(t)^{-m/2}
    =
    -\frac m2v(t)^{-1-\frac m2}\dot v(t),
\]
and hence
\[
    \frac{d}{dt}v(t)^{-m/2}
    \le
    \frac m2\nu.
\]
Integrating both sides from \(t_0\) to \(t\) yields
\[
    v(t)^{-m/2}-v(t_0)^{-m/2}
    \le
    \frac m2\nu(t-t_0),
\]
or equivalently,
\[
    v(t)^{-m/2}
    \le
    v(t_0)^{-m/2}
    +\frac m2\nu(t-t_0).
\]
and hence
\begin{equation}
\label{eq:direct-v-lower-integrated}
    v(t)
    \ge
    v(t_0)
    \left(
        1+\frac m2\nu v(t_0)^{m/2}(t-t_0)
    \right)^{-2/m}.
\end{equation}

For fixed \(t\ge t_0\), define
\[
    g(y):=
    y
    \left(
        1+\frac m2\nu y^{m/2}(t-t_0)
    \right)^{-2/m},
    \qquad y\ge0.
\]
Differentiating gives
\[
    g'(y)
    =
    \left(
        1+\frac m2\nu y^{m/2}(t-t_0)
    \right)^{-2/m-1}
    >0,
\]
so \(g\) is increasing on \(\mathbb R_{\ge0}\).
Equation~\eqref{eq:direct-v-lower-integrated} can be written as
\[
    v(t)\ge g(v(t_0)).
\]
Since~\eqref{eq:direct-quadratic-bounds} gives
\(v(t_0)\ge a\|x_0\|^2\), monotonicity of \(g\) implies
\[
    g(v(t_0))
    \ge
    g(a\|x_0\|^2).
\]
Therefore,
\[
    v(t)
    \ge
    g(v(t_0))
    \ge
    g(a\|x_0\|^2)
    =
    a\|x_0\|^2
    \left(
        1+\frac m2\nu a^{m/2}
        \|x_0\|^m(t-t_0)
    \right)^{-2/m}.
\]
On the other hand, the upper bound in
\eqref{eq:direct-quadratic-bounds}, evaluated at time \(t\), gives
\[
    v(t)\le b\|x(t,t_0,x_0)\|^2.
\]
Combining the two inequalities yields
\[
    b\|x(t,t_0,x_0)\|^2
    \ge
    a\|x_0\|^2
    \left(
        1+\frac m2\nu a^{m/2}
        \|x_0\|^m(t-t_0)
    \right)^{-2/m}.
\]
Dividing by \(b\) and taking square roots gives
\begin{equation}
\label{eq:direct-lower-trajectory-estimate}
    \|x(t,t_0,x_0)\|
    \ge
    C_1\|x_0\|
    \left(
        1+C_2\|x_0\|^m(t-t_0)
    \right)^{-1/m},
\end{equation}
where
\begin{equation}
\label{eq:direct-C1-C2}
    C_1:=\sqrt{\frac ab},
    \qquad
    C_2:=\frac m2\nu a^{m/2}.
\end{equation}

We show that \eqref{eq:direct-lower-trajectory-estimate} excludes
every uniform degree-\(q\) estimate with \(q<m\). Suppose, toward a
contradiction, that \(q<m\) belongs to \(\mathcal D_u\). By
Definition~\ref{def:degree-m-estimate}, there exist
\(\alpha_q,c_q,r_q>0\), independent of \(t_0\), such that
\begin{equation}
\label{eq:direct-assumed-degree-q}
    \|x(t,t_0,x_0)\|
    \le
    c_q
    \phi_q^{\alpha_q}(t-t_0;\|x_0\|)
\end{equation}
whenever \(\|x_0\|<r_q\).

Since \(0\in\overline{S\setminus\{0\}}\), choose
\(x_0\in S\setminus\{0\}\) with
\(\|x_0\|<\min\{r,r_q\}\), and set
\[
    \rho:=\|x_0\|>0,
    \qquad
    T:=t-t_0.
\]

If \(q=0\), then \eqref{eq:degree-m-envelope} and
\eqref{eq:direct-assumed-degree-q} give
\[
    \|x(t,t_0,x_0)\|
    \le
    c_q\rho e^{-\alpha_qT}.
\]
Combining this with \eqref{eq:direct-lower-trajectory-estimate} gives
\begin{equation}
\label{eq:direct-q-zero-contradiction}
    C_1(1+C_2\rho^mT)^{-1/m}
    \le
    c_qe^{-\alpha_qT},
    \qquad T\ge0.
\end{equation}
Multiplying \eqref{eq:direct-q-zero-contradiction} by
\(e^{\alpha_qT}\) gives
\[
    C_1e^{\alpha_qT}(1+C_2\rho^mT)^{-1/m}
    \le c_q.
\]
The left-hand side tends to \(+\infty\) as \(T\to\infty\), since an
exponential dominates every polynomial. This contradicts
\eqref{eq:direct-q-zero-contradiction}. Hence \(0\notin\mathcal D_u\).

Now let \(0<q<m\). By
\eqref{eq:degree-m-envelope} and
\eqref{eq:direct-assumed-degree-q},
\[
    \|x(t,t_0,x_0)\|
    \le
    c_q\rho
    (1+q\alpha_q\rho^qT)^{-1/q}.
\]
Combining this inequality with
\eqref{eq:direct-lower-trajectory-estimate} and cancelling
\(\rho>0\) gives
\begin{equation}
\label{eq:degree-q-contradiction}
    \frac{C_1}{c_q}
    (1+q\alpha_q\rho^qT)^{1/q}
    (1+C_2\rho^mT)^{-1/m}
    \le1,
    \qquad T\ge0.
\end{equation}

To examine the left-hand side of
\eqref{eq:degree-q-contradiction}, factor \(T\) from the two
parenthetical terms:
\begin{equation}
\label{eq:degree-q-asymptotic}
\begin{aligned}
&(1+q\alpha_q\rho^qT)^{1/q}
(1+C_2\rho^mT)^{-1/m}
\\
&\quad=
T^{\frac1q-\frac1m}
\left(\frac1T+q\alpha_q\rho^q\right)^{1/q}
\left(\frac1T+C_2\rho^m\right)^{-1/m}.
\end{aligned}
\end{equation}
The last two factors in \eqref{eq:degree-q-asymptotic} converge to
\[
    (q\alpha_q\rho^q)^{1/q}
    (C_2\rho^m)^{-1/m}
    =
    \frac{(q\alpha_q)^{1/q}}{C_2^{1/m}}
    >0.
\]
Moreover, \(0<q<m\) implies
\[
    \frac1q-\frac1m
    =
    \frac{m-q}{qm}
    >0.
\]
Consequently, \eqref{eq:degree-q-asymptotic} implies that the
left-hand side of \eqref{eq:degree-q-contradiction} tends to
\(+\infty\) as \(T\to\infty\), contradicting
\eqref{eq:degree-q-contradiction}. Therefore no
\(q\in(0,m)\) belongs to \(\mathcal D_u\).

We have proved
\[
    \mathcal D_u\cap[0,m)=\emptyset.
\]
Since \eqref{eq:direct-upper-trajectory-estimate} already established
\(m\in\mathcal D_u\), it follows that
\[
    \underline m_u=m=\min\mathcal D_u.
\]
By Definition~\ref{def:vector-degree-m-stability}, the origin is
uniformly degree-\(m\) stable.
\end{proof}

\section{Proof of Theorem~\ref{thm:time-varying-nonsmooth-admissible-degree-iff}}
\label{app:proof-converse}

\begin{proof}
We prove both directions.

We first recall the comparison lemma for upper Dini derivatives used
in the sufficiency direction.

\begin{lemma}[Comparison Lemma {\cite[Lemma~3]{moulay2008finite}}]
\label{lem:dini-comparison}
Let \(J\subseteq\mathbb R\) be an interval, and suppose the scalar
differential equation
\[
    \dot z=f(z),\qquad z\in J,
\]
admits a global semiflow
\[
    \Phi:\mathbb R_{\ge0}\times J\to J,
\]
where \(f:J\to\mathbb R\) is continuous. If
\(g:[a,b)\to J\) is continuous and satisfies
\[
    D^+g(t)\le f(g(t)),
    \qquad t\in[a,b),
\]
then
\[
    g(t)\le \Phi(t-a,g(a)),
    \qquad t\in[a,b).
\]
\end{lemma}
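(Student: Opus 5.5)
The plan is to compare \(g\) with the reference solution \(y(t):=\Phi(t-a,g(a))\) by a first-crossing argument. Since \(J\subseteq\mathbb R\) and the equation is scalar and autonomous, I will not perturb \(f\). Instead, I will exploit the one-dimensional structure through the time map \(H(z)=\int^z d\zeta/f(\zeta)\) on intervals where \(f\neq0\). The existence of a global semiflow means forward solutions are unique, and this uniqueness is the only regularity I will use.

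\emph{Step 1 (reduction to a touching point).} Suppose the claim fails, so \(g(t_1)>y(t_1)\) for some \(t_1\in(a,b)\). Let \(s:=\sup\{t\in[a,t_1]: g(t)\le y(t)\}\). By continuity and \(g(a)=y(a)\), we have \(g(s)=y(s)=:z\) and \(g>y\) on \((s,t_1]\). The semigroup property gives \(y(t)=\Phi(t-s,z)\) for \(t\ge s\), so we may restart at \(s\). It then suffices to show that \(g(t)\le\Phi(t-s,z)\) on some right neighbourhood of \(s\), which contradicts the choice of \(s\).

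\emph{Step 2 (the cases \(f(z)\neq0\)).} If \(f(z)<0\), then \(f<0\) on a neighbourhood \(N\) of \(z\). There \(H\) is strictly decreasing, \(y\) is strictly decreasing, and \(H(y(t))=t-s\). While \(g\) stays in \(N\), the chain rule for Dini derivatives composed with a \(C^1\) function of nonzero derivative gives \(D^+H(g(t))\ge 1\) at points where \(g>y\). A Dini-derivative monotonicity lemma (a continuous function with \(D^+\ge1\) is at least its linear growth) then yields \(H(g(t))\ge t-s=H(y(t))\), that is \(g\le y\), which is a contradiction. If \(f(z)>0\), the argument is the same with \(H\) increasing: \(D^+H(g)\le1\) gives \(H(g(t))\le t-s\), hence again \(g\le y\).

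\emph{Step 3 (equilibrium case \(f(z)=0\)), which I expect to be the main obstacle.} Here \(y\equiv z\) on \([s,\infty)\) by forward uniqueness, and we must prevent \(g\) from rising above an equilibrium. Points above \(z\) with \(f\le0\) are easy to handle, because there \(D^+g\le0\) at any point where \(g\) exceeds \(z\), so \(g\) cannot climb. The delicate situation is \(f>0\) on \((z,z+\eta)\). I would first establish the Osgood-type fact that forward uniqueness of the semiflow at \(z\) forces \(\int_z^{z'}d\zeta/f(\zeta)=\infty\) for every \(z'\in(z,z+\eta)\). Otherwise the function \(t\mapsto H^{-1}(t)\) would be a second forward solution leaving \(z\) in finite time.

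Then take a point \(t_2>s\) with \(g(t_2)=z'>z\), and the last time \(t_3<t_2\) with \(g(t_3)=z+\epsilon\) for small \(\epsilon>0\). Applying the Step 2 estimate on \([t_3,t_2]\) gives \(\int_{z+\epsilon}^{z'}d\zeta/f\le t_2-t_3\le t_2-s\) for every \(\epsilon\). Letting \(\epsilon\downarrow0\) contradicts the divergence of the integral. This closes all cases and proves \(g(t)\le\Phi(t-a,g(a))\) on \([a,b)\).
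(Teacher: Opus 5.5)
\paragraph{Verdict.} There is a genuine gap in your Step~3: the case split at an equilibrium is not exhaustive. It can be repaired with your own tools. Steps~1 and~2 are sound. The Dini chain rule gives $D^+(H\circ g)=H'(g)\,D_+g\ge 1$ when $H'<0$, and $D^+(H\circ g)=H'(g)\,D^+g\le 1$ when $H'>0$, at every time $g$ stays in $N$. Reading ``global semiflow'' as forward uniqueness is the right reading, and in fact a necessary one: for $\dot z=-\sqrt{|z|}$ with the continuous flow that passes through $0$, the function $g\equiv0$ violates the conclusion.

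\paragraph{The gap.} In Step~3 you treat only ``$f\le0$ just above $z$'' and ``$f>0$ on $(z,z+\eta)$'', and then assert that all cases are closed. For continuous, even $C^1$, $f$ the sign of $f$ need not stabilize on any right neighbourhood of an equilibrium. For example, $f(\zeta)=(\zeta-z)^3\sin\bigl(1/(\zeta-z)\bigr)$ changes sign on every $(z,z+\eta)$. Likewise $f(\zeta)=(\zeta-z)^3\sin^2\bigl(1/(\zeta-z)\bigr)\ge0$ has zeros accumulating at $z$. Both are $C^1$, so forward uniqueness holds. In these cases $D^+g\le0$ is not available near $z$. Worse, the integral $\int_{z+\varepsilon}^{z'}d\zeta/f$ on which your last-crossing estimate rests is undefined for every $\varepsilon$, because $f$ vanishes inside $(z+\varepsilon,z')$. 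So the equilibrium case is not closed as written.

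\paragraph{The repair.} Apply your argument at a better base point.
\begin{itemize}
\item Since $y\equiv z$ and $g(t_1)>z$, some $t_2\in(s,t_1)$ has $f(g(t_2))>0$. Otherwise $D^+g\le f(g)\le0$ on $[s,t_1)$, so $g$ would be nonincreasing and $g(t_1)\le z$.
\item Set $z':=g(t_2)$ and $w^*:=\max\{\zeta\in[z,z']:f(\zeta)\le0\}$. Then $w^*<z'$, $f(w^*)=0$ by continuity, and $f>0$ on $(w^*,z']$.
\item Your Osgood argument at the equilibrium $w^*$ gives $\int_{w^*}^{z'}d\zeta/f=\infty$.
\item Let $t_3$ be the last time in $[s,t_2]$ with $g=w^*$. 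Let $\tau_2$ be the first time in $[t_3,t_2]$ with $g=z'$. Let $\tau_1$ be the last time in $[t_3,\tau_2]$ with $g=w^*+\varepsilon$.
\item On $[\tau_1,\tau_2]$ the values of $g$ lie in $[w^*+\varepsilon,z']$, where $f>0$. Choosing $\tau_2$ as a first hitting time also supplies the confinement your version leaves implicit.
\item Your Step~2 estimate then gives $\int_{w^*+\varepsilon}^{z'}d\zeta/f\le t_1-s$ for all small $\varepsilon$. Letting $\varepsilon\downarrow0$ contradicts the divergence.
\end{itemize}
This single argument covers both of your subcases as well as the mixed ones.

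\paragraph{Comparison with the paper.} With this fix your proof is a self-contained scalar time-map/Osgood argument. The paper gives no proof of its own; it only cites \cite{moulay2008finite} and \cite{kartsatos1980advanced}. The standard argument behind those references compares $g$ with the maximal solution: use $\varepsilon$-perturbed comparison equations, a strict first-crossing argument, and a limit. Uniqueness then identifies the maximal solution with $\Phi(\cdot,g(a))$. That route extends to nonautonomous right-hand sides. Yours avoids constructing and controlling perturbed solutions inside $J$.
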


A proof and more general versions of the comparison lemma can be found in \cite[Sec.~5.2]{kartsatos1980advanced}.

We first prove sufficiency. Assume that \(V\) and
\(c_1,c_2,c_3,r>0\) satisfy
\eqref{eq:converse-quadratic-bounds} and
\eqref{eq:converse-dini-dissipation}. Choose any \(\bar r\in(0,r)\), and set
\[
    r_0:=\bar r\sqrt{\frac{c_1}{c_2}}.
\]
Fix \(t_0\ge0\) and \(x_0\in B_{r_0}(0)\), and define
\[
    x(t):=\Psi(t;t_0,x_0),
    \qquad
    v(t):=V(t,x(t)).
\]
By the semigroup property and \eqref{eq:dini-derivative},
\[
D^+v(t)=D_f^+V(t,x(t)).
\]
Hence, by \eqref{eq:converse-dini-dissipation},
\begin{equation}
\label{eq:converse-v-dini}
    D^+v(t)
    \le
    -c_3v(t)^{1+\frac m2}
\end{equation}
for every \(t\ge t_0\) for which \(x(t)\in B_r(0)\).

Consider the scalar comparison equation
\begin{equation}
\label{eq:converse-comparison-ode}
    \dot z=-c_3z^{1+\frac m2},
    \qquad
    z(t_0)=v(t_0).
\end{equation}
Its solution is nonincreasing and exists for all \(t\ge t_0\).
Applying Lemma~\ref{lem:dini-comparison} to
\eqref{eq:converse-v-dini}--\eqref{eq:converse-comparison-ode} yields
\begin{equation}
\label{eq:converse-v-comparison}
    v(t)\le z(t)\le z(t_0)=v(t_0)
\end{equation}
as long as \(x(t)\in B_r(0)\).

We next show that this local comparison is valid for all forward
time. If the trajectory left \(B_{\bar r}(0)\), let \(t_e>t_0\) be its
first exit time. Then \(\|x(t_e)\|=\bar r<r\), whereas
\eqref{eq:converse-v-comparison} and the upper bound in
\eqref{eq:converse-quadratic-bounds} give
\[
    V(t_e,x(t_e))
    \le
    V(t_0,x_0)
    \le
    c_2\|x_0\|^2
    <
    c_2r_0^2
    =
    c_1\bar r^2.
\]
The lower bound in \eqref{eq:converse-quadratic-bounds}, however,
gives
\[
    V(t_e,x(t_e))
    \ge
    c_1\|x(t_e)\|^2
    =
    c_1\bar r^2,
\]
a contradiction. Thus
\[
    x(t)\in B_{\bar r}(0),
    \qquad t\ge t_0.
\]
Since \(\bar r<r\), this implies
\begin{equation}
\label{eq:converse-forward-containment}
    x(t)\in B_r(0),
    \qquad t\ge t_0,
\end{equation}
and \eqref{eq:converse-v-comparison} holds for every \(t\ge t_0\).

For \(m=0\), the solution of
\eqref{eq:converse-comparison-ode} is
\[
    z(t)=v(t_0)e^{-c_3(t-t_0)}.
\]
Hence, by \eqref{eq:converse-v-comparison},
\[
    v(t)\le v(t_0)e^{-c_3(t-t_0)}.
\]
Using the two inequalities in
\eqref{eq:converse-quadratic-bounds}, we obtain
\begin{equation}
\label{eq:converse-degree-zero-estimate}
    \|x(t)\|
    \le
    \sqrt{\frac{c_2}{c_1}}\,
    \|x_0\|
    e^{-\frac{c_3}{2}(t-t_0)}.
\end{equation}
By \eqref{eq:degree-m-envelope},
\eqref{eq:converse-degree-zero-estimate} is a uniform degree-\(0\)
estimate with
\[
    c=\sqrt{\frac{c_2}{c_1}},
    \qquad
    \alpha=\frac{c_3}{2}.
\]

Now suppose \(m>0\). Solving
\eqref{eq:converse-comparison-ode} gives
\[
    z(t)
    =
    v(t_0)
    \left(
        1+\frac m2c_3v(t_0)^{m/2}(t-t_0)
    \right)^{-2/m}.
\]
Together with \eqref{eq:converse-v-comparison},
\begin{equation}
\label{eq:converse-v-upper-integrated}
    v(t)
    \le
    v(t_0)
    \left(
        1+\frac m2c_3v(t_0)^{m/2}(t-t_0)
    \right)^{-2/m}.
\end{equation}

The upper bound in \eqref{eq:converse-quadratic-bounds} gives
\(v(t_0)\le c_2\|x_0\|^2\). Moreover, for fixed \(t\ge t_0\), define
\[
    h(y):=
    \left(
        1+\frac m2c_3y^{m/2}(t-t_0)
    \right)^{-2/m},
    \qquad y\ge0.
\]
For \(y>0\), differentiating gives
\[
    h'(y)
    =
    -c_3(t-t_0)y^{m/2-1}
    \left(
        1+\frac m2c_3y^{m/2}(t-t_0)
    \right)^{-2/m-1}
    \le0.
\]
Thus \(h\) is nonincreasing on \((0,\infty)\). Since \(h\) is continuous
at \(y=0\), it follows that \(h\) is nonincreasing on
\(\mathbb R_{\ge0}\).

By~\eqref{eq:converse-quadratic-bounds},
\[
    v(t_0)\ge c_1\|x_0\|^2.
\]
Since \(h\) is nonincreasing, this implies
\[
    h(v(t_0))
    \le
    h(c_1\|x_0\|^2),
\]
that is,
\[
    \left(
        1+\frac m2c_3v(t_0)^{m/2}(t-t_0)
    \right)^{-2/m}
    \le
    \left(
        1+\frac m2c_3c_1^{m/2}
        \|x_0\|^m(t-t_0)
    \right)^{-2/m}.
\]

Also, the upper bound in
\eqref{eq:converse-quadratic-bounds} gives
\[
    v(t_0)\le c_2\|x_0\|^2.
\]
Substituting both estimates into
\eqref{eq:converse-v-upper-integrated} yields
\[
    v(t)
    \le
    c_2\|x_0\|^2
    \left(
        1+\frac m2c_3c_1^{m/2}
        \|x_0\|^m(t-t_0)
    \right)^{-2/m}.
\]

Using now the lower bound in
\eqref{eq:converse-quadratic-bounds} at time \(t\),
\[
    c_1\|x(t)\|^2\le v(t),
\]
we obtain
\[
    c_1\|x(t)\|^2
    \le
    c_2\|x_0\|^2
    \left(
        1+\frac m2c_3c_1^{m/2}
        \|x_0\|^m(t-t_0)
    \right)^{-2/m}.
\]
Dividing by \(c_1\) and taking square roots gives
\begin{equation}
\label{eq:converse-upper-trajectory-estimate}
    \|x(t)\|
    \le
    \sqrt{\frac{c_2}{c_1}}\,
    \|x_0\|
    \left(
        1+\frac m2c_3c_1^{m/2}
        \|x_0\|^m(t-t_0)
    \right)^{-1/m}.
\end{equation}
Comparing \eqref{eq:converse-upper-trajectory-estimate} with
\eqref{eq:degree-m-envelope}, we may choose
\[
    c=\sqrt{\frac{c_2}{c_1}},
    \qquad
    \alpha=\frac{c_3}{2}c_1^{m/2}.
\]
Thus the origin admits a uniform degree-\(m\) estimate, so
\(m\in\mathcal D_u\).

We now prove necessity. Suppose \(m\in\mathcal D_u\). By
Definition~\ref{def:degree-m-estimate}, there exist
\(\alpha,K,R>0\) such that
\begin{equation}
\label{eq:converse-assumed-degree-estimate}
    \|\Psi(t;t_0,x_0)\|
    \le
    K\phi_m^\alpha(t-t_0;\|x_0\|)
\end{equation}
for every \(t_0\ge0\), \(x_0\in B_R(0)\), and \(t\ge t_0\).

Choose \(r>0\) such that \(Kr<R\). Since
\(\phi_m^\alpha(s;\rho)\le\rho\) for \(s,\rho\ge0\),
\eqref{eq:converse-assumed-degree-estimate} implies
\[
    \|\Psi(t;t_0,x_0)\|
    \le K\|x_0\|
    <R
\]
for every \(x_0\in B_r(0)\). Hence every such solution remains in
\(B_R(0)\) for all forward time.

For \(s\ge0\), write
\[
    \Psi_s^{t,x}:=\Psi(t+s;t,x),
\]
fix \(a>0\), and define
\begin{equation}
\label{eq:converse-V-construction}
    V(t,x)
    :=
    \sup_{\tau\ge0}
    \left\{
        \|\Psi_\tau^{t,x}\|^2
        +
        a\int_0^\tau
        \|\Psi_s^{t,x}\|^{m+2}\,ds
    \right\}.
\end{equation}

We first establish quadratic bounds for
\eqref{eq:converse-V-construction}. Taking \(\tau=0\) in~\eqref{eq:converse-V-construction}, the integral term
vanishes and \(\Psi_0^{t,x}=x\). Hence the quantity inside the supremum
equals \(\|x\|^2\), and therefore
\begin{equation}
\label{eq:converse-V-lower-bound}
    V(t,x)\ge\|x\|^2.
\end{equation}
Moreover, by~\eqref{eq:converse-assumed-degree-estimate},
\begin{equation}
\label{eq:converse-flow-degree-estimate}
    \|\Psi_\tau^{t,x}\|
    \le
    K\phi_m^\alpha(\tau;\|x\|),
    \qquad \tau\ge0.
\end{equation}
Since~\eqref{eq:degree-m-envelope} gives
\[
    \phi_m^\alpha(\tau;\|x\|)
    \le
    \phi_m^\alpha(0;\|x\|)
    =
    \|x\|,
    \qquad \tau\ge0,
\]
it follows that
\begin{equation}
\label{eq:bound1}
    \|\Psi_\tau^{t,x}\|^2
    \le
    K^2\|x\|^2,
    \qquad \tau\ge0.
\end{equation}

For \(m>0\), using~\eqref{eq:degree-m-envelope} in
\eqref{eq:converse-flow-degree-estimate} gives
\[
    \|\Psi_s^{t,x}\|
    \le
    K\|x\|
    \left(
        1+m\alpha\|x\|^ms
    \right)^{-1/m}.
\]
Raising both sides to the power \(m+2>0\) yields
\[
    \|\Psi_s^{t,x}\|^{m+2}
    \le
    K^{m+2}\|x\|^{m+2}
    \left(
        1+m\alpha\|x\|^ms
    \right)^{-\frac{m+2}{m}},
\]
and direct integration yields
\begin{equation}
\label{eq:converse-integral-bound-positive-m}
\begin{aligned}
    \int_0^\infty
    \|\Psi_s^{t,x}\|^{m+2}\,ds
    &\le
    K^{m+2}\|x\|^{m+2}
    \int_0^\infty
    \left(
        1+m\alpha\|x\|^ms
    \right)^{-\frac{m+2}{m}} ds \\
    &=
    \frac{K^{m+2}}{2\alpha}\|x\|^2.
\end{aligned}
\end{equation}
For \(m=0\), using~\eqref{eq:degree-m-envelope} in
\eqref{eq:converse-flow-degree-estimate} gives
\[
    \|\Psi_s^{t,x}\|
    \le K\|x\|e^{-\alpha s},
\]
and therefore
\begin{equation}
\label{eq:converse-integral-bound-zero}
\begin{aligned}
    \int_0^\infty
    \|\Psi_s^{t,x}\|^2\,ds
    &\le
    K^2\|x\|^2
    \int_0^\infty e^{-2\alpha s}\,ds \\
    &=
    \frac{K^2}{2\alpha}\|x\|^2.
\end{aligned}
\end{equation}
Since the right-hand side of
\eqref{eq:converse-integral-bound-positive-m} reduces to
\(\frac{K^2}{2\alpha}\|x\|^2\) when \(m=0\),
\eqref{eq:converse-integral-bound-positive-m} in fact holds for all
\(m\ge0\).

Since the integrands are nonnegative, for every \(\tau\ge0\),
\[
    \int_0^\tau
    \|\Psi_s^{t,x}\|^{m+2}\,ds
    \le
    \int_0^\infty
    \|\Psi_s^{t,x}\|^{m+2}\,ds.
\]
Hence, by~\eqref{eq:converse-integral-bound-positive-m} for \(m \geq 0\),
\begin{equation}
\label{eq:bound2}
 \int_0^\tau
    \|\Psi_s^{t,x}\|^{m+2}\,ds
    \le
    \frac{K^{m+2}}{2\alpha}\|x\|^2,
    \qquad \tau\ge0.   
\end{equation}

Combining \eqref{eq:bound1} and \eqref{eq:bound2} with the definition
\eqref{eq:converse-V-construction}, for every \(\tau\ge0\) we have
\[
    \|\Psi_\tau^{t,x}\|^2
    +
    a\int_0^\tau\|\Psi_s^{t,x}\|^{m+2}\,ds
    \le
    \left(
        K^2+\frac{aK^{m+2}}{2\alpha}
    \right)\|x\|^2.
\]
Taking the supremum over \(\tau\ge0\) and using
\eqref{eq:converse-V-construction} therefore gives
\[
    V(t,x)
    \le
    \left(
        K^2+\frac{aK^{m+2}}{2\alpha}
    \right)\|x\|^2.
\]
Together with the lower bound~\eqref{eq:converse-V-lower-bound}, we obtain
\begin{equation}
\label{eq:converse-constructed-quadratic-bounds}
    \|x\|^2
    \le
    V(t,x)
    \le
    c_2\|x\|^2,
    \qquad
    t\ge0,\quad x\in B_r(0),
\end{equation}
where
\[
    c_2:=
    K^2+\frac{aK^{m+2}}{2\alpha}>0.
\]
In particular, \(c_2\) is independent of \(t\).

We next prove continuity of \(V\). At \(x=0\),
\eqref{eq:converse-constructed-quadratic-bounds} gives
\[
    0\le V(t,x)\le c_2\|x\|^2\to0
    \qquad\text{as }x\to0,
\]
uniformly in \(t\).

Fix
\((t_\ast,x_\ast)\in\mathbb R_{\ge0}\times B_r(0)\) with
\(x_\ast\neq0\), and choose a compact neighborhood \(U\) of
\((t_\ast,x_\ast)\) such that
\[
    \|x\|\ge\delta>0,
    \qquad (t,x)\in U.
\]
For \(T>0\), define
\begin{equation}
\label{eq:converse-F-definition}
    F(\tau,t,x)
    :=
    \|\Psi_\tau^{t,x}\|^2
    +
    a\int_0^\tau
    \|\Psi_s^{t,x}\|^{m+2}\,ds,
    \qquad
    (\tau,t,x)\in[0,T]\times U.
\end{equation}
Continuous dependence of solutions on initial time and initial state
implies that the first term in
\eqref{eq:converse-F-definition} is continuous. The integral term is
also continuous. Indeed, if
\((\tau_k,t_k,x_k)\to(\tau,t,x)\), then
\[
    \int_0^{\tau_k}
    \|\Psi_s^{t_k,x_k}\|^{m+2}\,ds
    =
    \int_0^T
    \mathbf 1_{[0,\tau_k]}(s)
    \|\Psi_s^{t_k,x_k}\|^{m+2}\,ds.
\]
The integrands converge pointwise for all \(s\neq\tau\) and are
uniformly bounded on the relevant compact set. Dominated convergence
therefore shows that \(F\) is continuous on \([0,T]\times U\).

Define
\begin{equation}
\label{eq:converse-VT-definition}
    V_T(t,x)
    :=
    \max_{0\le\tau\le T}F(\tau,t,x).
\end{equation}
Since \(F\) is continuous and \([0,T]\) is compact, more precisely, the constant correspondence \((t,x)\mapsto[0,T]\) is continuous with nonempty compact values, Berge's Maximum Theorem~\cite[Theorem~17.31]{aliprantis2006infinite}
implies that \(V_T\) is continuous on \(U\).

To pass to the infinite horizon, define
\begin{equation}
\label{eq:converse-I-definition}
    I(t,x)
    :=
    \int_0^\infty
    \|\Psi_s^{t,x}\|^{m+2}\,ds.
\end{equation}
The bounds obtained from
\eqref{eq:converse-assumed-degree-estimate} and
\eqref{eq:degree-m-envelope}, together with
\(\|x\|\ge\delta\) on \(U\), imply
\begin{equation}
\label{eq:converse-integral-tail}
    \sup_{(t,x)\in U}
    \int_T^\infty
    \|\Psi_s^{t,x}\|^{m+2}\,ds
    \longrightarrow0
\end{equation}
and
\begin{equation}
\label{eq:converse-state-tail}
    \sup_{(t,x)\in U}
    \sup_{\tau\ge T}
    \|\Psi_\tau^{t,x}\|^2
    \longrightarrow0
\end{equation}
as \(T\to\infty\). In particular,
\eqref{eq:converse-integral-tail} and continuity of the finite-horizon
integrals imply that \(I\) in \eqref{eq:converse-I-definition} is
continuous on \(U\).

For \(\tau\ge T\), equations
\eqref{eq:converse-F-definition} and
\eqref{eq:converse-I-definition} give
\begin{equation}
\label{eq:converse-F-tail-decomposition}
    F(\tau,t,x)
    =
    aI(t,x)
    +
    \|\Psi_\tau^{t,x}\|^2
    -
    a\int_\tau^\infty
    \|\Psi_s^{t,x}\|^{m+2}\,ds.
\end{equation}
By \eqref{eq:converse-integral-tail} and
\eqref{eq:converse-state-tail}, the last two terms in
\eqref{eq:converse-F-tail-decomposition} converge uniformly to zero
on \(U\). Hence
\begin{equation}
\label{eq:converse-tail-uniform}
    \sup_{(t,x)\in U}
    \left|
        \sup_{\tau\ge T}F(\tau,t,x)-aI(t,x)
    \right|
    \longrightarrow0.
\end{equation}

By \eqref{eq:converse-V-construction} and
\eqref{eq:converse-VT-definition},
\[
    V(t,x)
    =
    \max\left\{
        V_T(t,x),
        \sup_{\tau\ge T}F(\tau,t,x)
    \right\}.
\]
Since the maximum is Lipschitz in each argument,
\eqref{eq:converse-tail-uniform} implies
\[
    \sup_{(t,x)\in U}
    \left|
        V(t,x)
        -
        \max\{V_T(t,x),aI(t,x)\}
    \right|
    \longrightarrow0.
\]
For each \(T>0\), both \(V_T\) and \(I\) are continuous on \(U\).
Thus \(V\) is a uniform limit on \(U\) of continuous functions and is
therefore continuous there. Since
\((t_\ast,x_\ast)\) was arbitrary, and continuity at \(x=0\) follows
from \eqref{eq:converse-constructed-quadratic-bounds}, \(V\) is
continuous on \(\mathbb R_{\ge0}\times B_r(0)\).

It remains to establish the Dini derivative inequality. Fix \(x\in B_r(0)\). For all sufficiently small \(h>0\),
\(\Psi(t+h;t,x)\in B_r(0)\), and the semigroup property gives
\begin{equation}
\label{eq:converse-semigroup}
    \Psi(t+h+\tau;t+h,\Psi(t+h;t,x))
    =
    \Psi(t+h+\tau;t,x).
\end{equation}
Applying \eqref{eq:converse-semigroup} to
\eqref{eq:converse-V-construction} yields
\[
\begin{aligned}
&V(t+h,\Psi(t+h;t,x))
\\
&=
\sup_{\tau\ge0}
\left\{
    \|\Psi(t+h+\tau;t,x)\|^2
    +
    a\int_h^{h+\tau}
    \|\Psi(t+s;t,x)\|^{m+2}\,ds
\right\}.
\end{aligned}
\]
With \(T=h+\tau\),
\[
\begin{aligned}
&V(t+h,\Psi(t+h;t,x))
\\
&=
\sup_{T\ge h}
\left\{
    \|\Psi(t+T;t,x)\|^2
    +
    a\int_0^T
    \|\Psi(t+s;t,x)\|^{m+2}\,ds
    -
    a\int_0^h
    \|\Psi(t+s;t,x)\|^{m+2}\,ds
\right\}.
\end{aligned}
\]
Since the final integral is independent of \(T\), and the supremum
over \(T\ge h\) is bounded by that over \(T\ge0\), definition
\eqref{eq:converse-V-construction} gives
\begin{equation}
\label{eq:converse-dynamic-programming}
    V(t+h,\Psi(t+h;t,x))
    \le
    V(t,x)
    -
    a\int_0^h
    \|\Psi(t+s;t,x)\|^{m+2}\,ds.
\end{equation}
Subtracting \(V(t,x)\) from both sides of
\eqref{eq:converse-dynamic-programming}, dividing by \(h>0\), and
taking \(\limsup_{h\downarrow0}\), we obtain from
\eqref{eq:dini-derivative}
\begin{equation}
\label{eq:converse-dini-state}
    D_f^+V(t,x)
    \le
    -a\|x\|^{m+2}.
\end{equation}

Finally, the upper bound in
\eqref{eq:converse-constructed-quadratic-bounds} implies
\[
    \|x\|^{m+2}
    \ge
    c_2^{-\left(1+\frac m2\right)}
    V(t,x)^{1+\frac m2}.
\]
Substituting this into \eqref{eq:converse-dini-state} gives
\begin{equation}
\label{eq:converse-final-dissipation}
    D_f^+V(t,x)
    \le
    -a c_2^{-\left(1+\frac m2\right)}
    V(t,x)^{1+\frac m2}.
\end{equation}
Thus \eqref{eq:converse-quadratic-bounds} and
\eqref{eq:converse-dini-dissipation} hold for the constructed
Lyapunov function with
\[
    c_1=1,
    \qquad
    c_3
    =
    a c_2^{-\left(1+\frac m2\right)}.
\]
This proves the converse implication and completes the proof.
\end{proof}
}

\end{document}